\documentclass{article}
\usepackage[colorlinks]{hyperref}
\usepackage{amsmath,amsthm,amssymb,cleveref,mathtools,graphicx}
\usepackage{unicode-math-input}
\usepackage{xcolor}
\usepackage{import}
\usepackage{algorithm}
\usepackage{algpseudocode}
\crefname{claim}{claim}{claims}
\DeclarePairedDelimiter{\abs}{|}{|}
\DeclarePairedDelimiter{\Abs}{\Vert}{\Vert}
\algnewcommand\Print{\textbf{print} }%
\algrenewcommand\textproc[1]{\textnormal{\textsc{#1}}}
\DeclareMathOperator\id{id}
\newtheorem{definition}    {Definition}
\newtheorem{remark}        {Remark}
\newtheorem{proposition}   {Proposition}
\newtheorem{theorem}       {Theorem}
\newtheorem{lemma}         {Lemma}
\newtheorem{example}       {Example}
\begin{document}
\umiDeclareMathChar{˜}{\tilde}
\umiDeclareMathChar{‾}{\overline}
\umiDeclareMathChar{ˆ}{\widehat}
\umiDeclareMathChar{˘}{\breve}
\umiDeclareMathChar{Õ}{\widetilde O}
\umiDeclareMathChar{‥}{\mathbin{.\,.}}  
\newcommand\clamp{\operatorname{clamp}}
\newcommand\RelCond{\operatorname{RelCond}}
\newcommand\RelNewtonCond{\operatorname{RelNewtonCond}}
\newcommand\FFT{\operatorname{FFT}}
\newcommand\argmax{\operatorname{argmax}}
\newcommand\minmax{\operatorname{minmax}}
\newcommand\round{\operatorname{round}}
\newcommand\Ball{\operatorname{Ball}}
\newcommand\Int{\operatorname{Int}}
\newcommand\val{\operatorname{val}}
\newcommand\var{\operatorname{var}}
\newcommand\polylog{\operatorname{polylog}}
\newcommand\poly{\operatorname{poly}}
\newcommand\RescaleRound{\Call{RescaleRound}}
\newcommand\RescaledConvolution{\Call{RescaledConvolution}}
\newcommand\AutoRescaledConvolution{\Call{AutoRescaledConvolution}}
\newcommand\Unscale{\Call{Unscale}}
\newcommand\IntegerConvolution{\Call{IntegerConvolution}}
\newcommand\ImportFigure[2]{
	\begin{figure}
		\centering
		\import{figures}{#1.pdf_tex}
		\caption{#2}
		\label{#1}
	\end{figure}
}
\pdfglyphtounicode{Delta}{0394}  

\title{Faster Stable Numerical Polynomial Multiplication}
\author{Hong Duc Bui}
\maketitle

\begin{abstract}
	In the preprint \cite{stablemult}, van der Hoeven considers the problem of multiplying two
	polynomials with floating-point coefficients, and give an algorithm to compute the product
	with small relative Newton error in time $O(np \log(np))$, where $n$ is the degree
	and $p$ is the required precision. In this paper, we describe a significantly simpler
	algorithm with the same time complexity and error bound.

	Independently, Bringmann and Cassis considered the near-convex min-plus convolution problem
	in \cite{bringmann_2023}, and presented an algorithm to solve that problem.
	We observe that our algorithm can be adapted to that problem
	to speed up Bringmann and Cassis' algorithm by a logarithmic factor.
\end{abstract}

\section{Introduction}

Multiplying two polynomials is a well-studied problem in both theoretical and practical computer science.
In this article, we consider the problem of multiplying two polynomials with coefficients given
as floating-point numbers.

We work in the word RAM model, 
where memory is divided into words,
each word has $w ≥ 1$ bits,
and the standard arithmetic operations (addition, subtraction, multiplication, division,
bit shifts) can be done in $O(1)$ time.
This model is designed so that the time complexities of algorithms reflect running times on real computers. 

In this model, a positive integer can be represented in radix $2^w$,
such that each base-$2^w$ digit fits in a machine word.\footnote{
	The authors of GMPlib calls base-$2^w$ digits \emph{limbs}, 
	we would just call them words instead.
}
According to \cite{Furer2014}, under these assumptions,
multiplication of two $p$-word integers can be done in time $O(p \log p)$.
The algorithm achieving this time complexity is an adaptation of the second Strassen--Schönhage algorithm 
to the word RAM model.
It works by decomposing each integer into a univariate polynomial,
use the fast Fourier transform 
to compute their convolution, then transform the result back to an integer.

Furthermore, by Kronecker substitution, 
two polynomials
with $p$-word integer coefficients and total degree $n ≤ 2^w$ can be multiplied in time $O(np \log(np))$.
This is the asymptotically best known algorithm.\footnote{
	It was mentioned in \cite[Remark~2]{stablemult} that bivariate FFT may be faster in practice,
	but only by a constant factor.
}

Let $p ≥ 1$ be an integer.
Define a $p$-word floating-point number to be a real number of the form $± m ⋅ 2^{±e}$,
where $m$ is a non-negative integer of at most $p$ words, and $e$ is an $O(1)$-word integer.
Then the problem of floating-point polynomial multiplication can be stated as follows:
\begin{quote}
	Given two polynomials with total degree $n$, where each coefficient is a $p$-word floating-point number,
	compute their product.
\end{quote}

In the easiest case, where all exponents $e$ are within $O(1)$ of each other,
the problem can be reduced to polynomial multiplication with integer coefficients
by multiplying all coefficients by an appropriate power of two.
This can be done in time $O(np)$ with bit shifts, and it makes all the coefficients $(p+O(1))$-word integers.
Then, the previously mentioned algorithm can perform the multiplication in $O(np \log(np))$.

In practice, however, it is also common for the exponents $e$ to vary.
For example, the coefficient of $x^0$ in the polynomial $(x+1)^{10000}$ is $1$,
while the coefficient of $x^{5000}$ is approximately $2^{9993}$.

In this case, it is often prohibitively expensive to compute the product exactly.
In this paper, we will describe an algorithm that solves an easier variant of the problem,
namely computing an approximation of the product with small \emph{relative Newton error}.
The time complexity of our algorithm is $O(np \log(np))$, which is the same as the easy case.

\emph{Relative Newton error} was first defined in an unpublished preprint by van der Hoeven \cite{stablemult}.
This preprint contains an algorithm that solves the same problem of approximating a floating-point polynomial product,
with the same time complexity $O(np \log(np))$.
Our algorithm has the advantage of being simpler and having smaller constant factor.

Here is how to navigate this paper.
The main algorithm is \Cref{algorithm:block_division}.
Its proof of accuracy is in \Cref{main_algorithm_accuracy},
and the proof that its time complexity is $O(np \log(np))$ is in \Cref{main_algorithm_time_complexity}.
The main idea of the algorithm, and its proof strategy, is outlined in \Cref{sec_main_algorithm}.

\section{Notation}
\label{section:notation}

\begin{table}
	\centering
	\begin{tabular}{cc}
		Symbol & Meaning \\ \hline
		$a$, $b$ & input sequences \\
		$α$, $β$ & affine function constant terms\\
		$c$ & convolution of $a$ and $b$ \\
		$d$ & difference $˘f(i)+˘g(j)-˘h(k)$ \\
		$e$ & twice-difference $˘h(k+1)$, exponent \\
		$f$, $g$ & exponent sequence of $a$ and $b$ \\
		$˘f$, $˘g$ & lower convex hull of $f$ and $g$ \\
		$˘h$ & min-plus convolution of $f$ and $g$ \\
		$i$, $j$ & indices \\
		$k$ & index, usually $i+j$ \\
		$ℓ$ & square side length \\
		$m$, $n$ & length of $a$ and $b$, mantissa \\
		$p$ & precision \\
		$q$ & index into sequence $‾k$ \\
		$𝐫$ & rectangle \\
		$σ$ & exponent slope \\
		$t$ & length of sequence $‾k$ \\
		$w$ & machine width \\
		$x$ & indeterminate, mostly unused
	\end{tabular}
	\caption{Notation.}
\end{table}

For integers $i ≤ j$, we write $[i ‥ j] = \{ i, i+1, i+2, …, j \}$.
For an integer $n ≥ 0$, we write $[n] = [0 ‥ n]$.
Sequences are represented by functions:
a sequence $a = (a(0), a(1), …, a(n))$ of real numbers is a function $a \colon [n] → ℝ$.
$a|_{[i ‥ j]}$ is the restriction of $a$ to the domain $[i ‥ j]$.

As in the introduction, a nonnegative $p$-word integer is an integer $m$ such that $0 ≤ m < 2^{w ⋅ p}$.
A $p$-word integer is an integer of the form $± m$, where $m$ is a nonnegative $p$-word integer.
We note that one more bit is needed to represent the sign.

We represent a rational $a/b$ with two (not necessarily coprime) integers $a$ and $b$.
We say $a/b$ is a $p$-word rational if the two integers $a$ and $b$ representing $a/b$
are both $p$-word integers. We note that representing $a/b$ takes at least $2p$ word, plus
one bit for the sign.

Let $𝔽$ be the set of $p$-word floating-point numbers.
This means each element of $𝔽$ has the form $± m ⋅ 2^{± e}$, where $0 ≤ m < 2^{w ⋅ p}$.
We ignore the issue of overflow and underflow, and assume all input exponents are $O(1)$-word integers,
so all output exponents are $O(1)$-word integers as well.
Note that the bit-precision of these floating-point numbers 
is $w ⋅ p$, not $p$.

Let $n$ and $m$ be nonnegative integers.
The polynomials to be multiplied have coefficients $a \colon [n] → 𝔽$ and $b \colon [m] → 𝔽$.
Their product, if computed exactly, would be $c \colon [n+m] → ℝ$
defined by $c(k) = ∑_{i+j=k} a(i)⋅b(j)$ for each $k ∈[n+m]$.
We assume $m ≤ n ≤ 2^w$.

When clear, the polynomial $∑_{i ∈[n]} a(i) x^i$ can be denoted $a$. Similar for $b$ and $c$.
The indeterminate is always $x$.
So, $a = (x+1)^2$ is shorthand for $n = 2$, $a(0) = 1$, $a(1) = 2$, $a(2) = 1$.

For any $x_0 ∈ ℂ$, the evaluation of the polynomial $∑_{i ∈[n]} a(i) x^i$
at point $x = x_0$ is written $∑_{i ∈[n]} a(i) x_0^i$. Note that $a(x_0)$
cannot be used since this notation already mean the $x_0$-th coefficient of $a$.
The subscript ``$i ∈[n]$'' can be omitted when clear.

For any sequence $a \colon [n] → ℂ$, $\log_2 |a|$ denotes the sequence $φ_a \colon [n] → ℝ ∪ \{-∞ \}$
defined by \[ φ_a (i) = \log_2 |a(i)|. \]
Similar notations are used for other operations:
for $φ \colon [n] → ℝ ∪ \{-∞ \}$, write $2^φ$ to denote the sequence $a \colon [n] → ℝ$
defined by \[ a(i) = 2^{φ(i)}. \]
For any sequences $a, b \colon [n] → ℂ$ and $ε \colon [n] → ℝ$,
write $|a-b| ≤ ε$ to mean
\[ |a(i)-b(i)| ≤ ε(i) \text{ for all } i ∈ [n]. \]

Let $\id$ denote the identity function on an unspecified domain.
With the same notation as above, for $a \colon [n] → ℂ$, $λ ∈ ℂ$,
then $a+λ ⋅ \id$ is the function that takes in $x$ and returns $a(x)+λ ⋅\id(x) = a(x)+λx$.

For any two sets $A$ and $B$, let $A × B$ be the set of all (ordered) pairs $(a, b)$
for $a ∈ A$, $b ∈ B$.
For integers $i⁻ ≤ i⁺$ and $j⁻ ≤ j⁺$,
we call a set of the form $[i⁻ ‥ i⁺] ×[j⁻ ‥ j⁺]$ a \emph{rectangle}.
If, furthermore, $i⁺-i⁻ = j⁺-j⁻$, we call that set a \emph{square}.


For real numbers $a ≤ b$, we write $[a, b] = \{ x ∈ ℝ \mid a ≤ x ≤ b \}$.
For a rectangle $R = [i⁻ ‥ i⁺] × [j⁻ ‥ j⁺]$,
we write $\Int R = [i⁻, i⁺] × [j⁻, j⁺]$.
If $R$ is a square, we also call the set $\Int R$ a square.

We will often visualize a rectangle as a set of integral points on the plane.
For example, the rectangle $[2 ‥ 5] × [1 ‥ 3]$, with $12$ elements,
is visualized in \Cref{fig_example_rectangle}.

	\begin{figure}
		\centering
		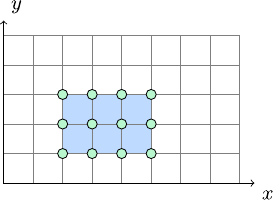
		\caption{Visualization of $[2 ‥ 5] × [1 ‥ 3]$.
	Elements of $[2 ‥ 5] × [1 ‥ 3]$ are depicted as circular points.
	The points in $\Int([2 ‥ 5] × [1 ‥ 3]) = [2, 5] × [1, 3]$ are colored blue.
}
		\label{fig_example_rectangle}
	\end{figure}

For a set $R ⊆ ℝ^2$ and $k ∈ ℝ$, define $\val(R, k)$ to be the measure of the set $\{ i ∈ ℝ \mid (i, k-i) ∈ R \}$.
We will not encounter a case where the measure does not exist or is infinite.

For a function $f$, let $\max f$ be the maximum value of $f$ on its domain.
In all the cases we work with, the maximum exists.
Similarly, $\min f$ is the minimum value of $f$ on its domain.
Write $\var f =\max f-\min f$.

\section{Prior works}
\label{sec_prior_works}

We use \cite{Brent_Zimmermann_2010} for various algorithms on arbitrary-precision floating-point numbers.
In particular, we have the following. All section references are to the arXiv preprint version of the book.
All time complexity analysis assume all input exponents are $O(1)$-word integers.

Honestly, we don't know how much detail is needed, so we err on the side of providing more detail.
A commonly-used alternative is to assume, say, multiplication of two $p$-word floating-point numbers takes time $M(p)$.
\begin{itemize}
	\item Section 1.2 shows the algorithm for integer addition and subtraction.
		While the time complexity analysis is not included there,
		it is easy to see the time complexity of adding or subtracting two $p$-word integers is $O(p)$.
		This is later used in Section 3.2 for floating-point addition and subtraction,
		which also takes $O(p)$.
	\item Section 1.3.4 discusses the application of the fast Fourier transform (FFT) in integer multiplication.
		The analysis in \cite{Furer2014}
		shows that one FFT-based algorithm can
		multiply two $p$-word integers in time $O(p \log p)$.
		As discussed in Section 3.3, this naturally lead to an $O(p \log p)$ algorithm to
		multiply two floating-point numbers.
	\item Section 3.4 gives an algorithm to divide two floating-point numbers.
		The algorithm uses Newton's iteration to compute the reciprocal of the divisor.
		Using the self-correcting property of Newton's iteration,
		the algorithm uses reduced precision at the beginning, and doubles the working precision
		after each iteration.
		Under our machine model, where word-sized integer division takes $O(1)$ time,
		an initial approximation of the reciprocal of the divisor, accurate to $Ω(w)$ bits,
		can be computed in $O(1)$ time by dividing the integers.
		Newton's iteration has quadratic convergence, so $O(\log p)$ further iterations are needed.
		Therefore, the algorithm takes time $O(p \log p)$.
	\item Section 3.5 explains how to compute the square root of a floating-point number,
		again using Newton's iteration.
		Since we do not assume a word-sized square root can be computed in $O(1)$ time,
		we need $O(\log (wp))$ iterations:
		the first $O(\log w)$ iterations take time $O(1)$ each used to compute a $Θ(w)$-bit
		approximation, and the remaining $O(\log p)$ iterations take total time
		$O(2 \log 2 + 4 \log 4 + 8 \log 8 + ⋯ + p \log p) = O(p \log p)$.

		When we already know an approximation to the result accurate to $O(w)$ bits,
		the time complexity is only $O(p \log p)$.
	\item Section 4.8 explains how to compute the logarithm of a floating-point number
		using the arithmetic--geometric mean iteration.
		Unlike Newton's iteration, the arithmetic--geometric mean iteration is not self-correcting,
		so all intermediate calculations must be done in full $p$-word precision.

		The first $O(\log w)$ iterations takes time $O(\log w+p \log p)$ each.
		The remaining $O(\log p)$ iterations takes time $O(p \log p)$ each.
		So the time complexity is $O\bigl(\log^2 w + p \log p \log (wp))\bigr)$.

		As a special case, if the required precision is only $b ≤ w$ bits,
		we need $O(\log b)$ iterations, where each iteration requires calculating a square root to $b$ bits,
		which takes $O(\log b)$. So the time complexity is $O(\log^2 b)$.
	\item Using the fast algorithm for computing the logarithm as above
		together with Newton's method, described in Section 4.2.5,
		we can compute the exponential of a floating-point number
		in time $O\bigl((\log^2 w + p \log p) \log(wp)\bigr)$.
		We believe the $O(\log^3 w)$ term can be made smaller by,
		for example, store intermediate values used in an arithmetic--geometric mean computation
		as the initial approximation in the next arithmetic--geometric mean computation,
		but in practice this is not the dominant term in the time complexity.
\end{itemize}

Intermediate values in a calculation often need to be rounded to a floating-point number.
Adapting \cite[Theorem~3.1.1]{Brent_Zimmermann_2010} to the notation in this article, we have:
\begin{lemma}
	For any $x ∈ ℝ$, there exists $u ∈ 𝔽$ such that $\frac{|u-x|}{|x|} ≤ 2^{-w⋅p}$.
\end{lemma}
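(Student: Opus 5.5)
The plan is a direct construction: normalise $x$ so that it has exactly $N := w⋅p$ significant bits, then round to the nearest integer. The case $x = 0$ is trivial: take $u = 0 = 0 ⋅ 2^0 ∈ 𝔽$, and we read the inequality as $|u-x| ≤ 2^{-w⋅p}|x|$. By symmetry of $𝔽$ under negation, it then suffices to treat $x > 0$. As stated in \Cref{section:notation}, we ignore overflow and underflow, so we may use any integer exponent.

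For $x > 0$ I will choose the unique integer $e$ with
\[ 2^{N-1} ≤ x ⋅ 2^{-e} < 2^N, \]
namely $e = \lfloor \log_2 x \rfloor - N + 1$. Set $y = x ⋅ 2^{-e}$ and let $m$ be the integer nearest to $y$, so that $|m - y| ≤ 1/2$. Put $u = m ⋅ 2^{e}$. Then
\[ \frac{|u-x|}{|x|} = \frac{|m-y|}{y} ≤ \frac{1/2}{2^{N-1}} = 2^{-N} = 2^{-w⋅p}. \]

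It remains to check that $u ∈ 𝔽$, i.e.\ that $0 ≤ m < 2^{N}$. Since $y ≥ 2^{N-1}$, we have $m ≥ 0$. Since $y < 2^N$, rounding gives $m ≤ 2^N$, and the only bad case is $m = 2^N$ exactly. In that case I rewrite $u = 2^N ⋅ 2^e = 2^{N-1} ⋅ 2^{e+1}$. This has mantissa $2^{N-1} < 2^N$ and represents the same real number, so the error bound is unchanged.

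This boundary case, where rounding carries into a new binade, is the only real subtlety, and it is handled by renormalising the exponent. Everything else is just the choice of exponent that places $y$ in $[2^{N-1}, 2^N)$ plus the half-unit rounding bound.
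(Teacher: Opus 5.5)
Your proof is correct: the exponent $e = \lfloor \log_2 x\rfloor - N + 1$ places $y$ in $[2^{N-1}, 2^N)$, rounding to the nearest integer then gives relative error at most $2^{-N}$, and the carry case $m = 2^N$ is properly handled by rewriting it as $2^{N-1}\cdot 2^{e+1}$. The paper gives no proof of its own and only cites this as an adaptation of Brent--Zimmermann's Theorem~3.1.1, the standard relative-error bound for rounding to nearest. Your argument is essentially the standard proof of that result, and like the paper it relies on ignoring overflow and underflow so that the exponent may be any integer.
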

Because of the table maker's dilemma, we will be more relaxed in our error analysis
and assume that all operations have relative error $≤ 2^{1-w⋅p}$.
In other words, if $x$ is the exact result, and $u$ is the approximate result,
then
\begin{equation}
	\label{error_requirement}
	\frac{|u-x|}{|x|} ≤ 2^{1-w⋅p}.
\end{equation}
Real implementations often have exactly-rounded results for elementary operations,
and $\frac{|u-x|}{|x|} ≤ (0.5+ε) ⋅ 2^{-w⋅p}$ for transcendental functions, for a small constant $ε$. 

\section{Related problems}
\label{sec_related_problems}

As mentioned in the introduction, our algorithm can be applied for the near-convex min-plus convolution problem too.
\begin{definition}[Min-plus convolution]
	Let $n ≥ m$ be nonnegative integers.
	Let $f \colon [n] → ℤ$ and $g \colon [m] → ℤ$ be two sequences of $O(1)$-word integers.
	The \emph{min-plus convolution} of $f$ and $g$ is a sequence $h \colon [n+m] → ℤ$ defined by
	\[ h(k) = \min_{\substack{(i,j) ∈ [n] ×[m] \\i+j=k}} f(i)+g(j). \]
\end{definition}
In general, when $n ∈ Θ(m)$, under the min-plus convolution hypothesis \cite{kunnemann_et_al:LIPIcs.ICALP.2017.21,cygan_et_al:LIPIcs.ICALP.2017.22}, we cannot do better than $n^{2-o(1)}$.

However, if we assume the sequences $f$ and $g$ are \emph{$Δ$-near convex}, an algorithm with time complexity $Õ(nΔ)$ was presented in \cite{bringmann_2023}.
Following the analysis, we see that the algorithm there takes time $O(nΔ \log n \log (nΔ))$.
Adapting our algorithm to this problem, we obtain an algorithm with time complexity $O(nΔ \log (nΔ))$,
which is faster by a logarithmic factor.

\section{Linear scaling}

An algorithm to multiply polynomials 
whose coefficient magnitudes approximately follow a geometric progression
was described in \cite[Section 3]{stablemult}.
We will use the following result, which uses the same underlying idea.

\begin{proposition}
	\label{prop:linear_scaling}
	Let $0 ≤ i⁻ ≤ i⁺ ≤ n$ and $0 ≤ j⁻ ≤ j⁺ ≤ m$ be integers.
	Suppose $m ≤ n$.
	Let $𝐫 = [i⁻‥ i⁺] × [j⁻‥ j⁺]$ be a rectangle.
	Let $a \colon [i⁻ ‥ i⁺] → 𝔽$ and $b \colon [j⁻ ‥ j⁺] → 𝔽$.
	Let $c \colon [i⁻+j⁻ ‥ i⁺+j⁺] → ℝ$ be their convolution computed exactly:
	\[ c(k) = ∑_{\substack{(i, j) ∈ 𝐫 \\ i+j=k}} a(i) ⋅ b(j).  \]
	Let $σ ∈ ℚ$ be a $O(1)$-word rational,
	and $α, β ∈ ℤ$ be $O(1)$-word integers.
	Suppose $\RescaleRound{⋅}$ is implemented such that
	\begin{equation}
		\label{error_requirement_rescale_round}
		|‾a(i) - a(i)/2^{σ ⋅ i+α}| ≤ \min\bigl(1, |a(i)|/2^{σ ⋅ i+α+1}\bigr),
	\end{equation}
	and $\Unscale{⋅}$ is implemented such that
	\begin{equation}
		\label{error_requirement_unscale}
		|˜c(k) - ‾c(k)⋅2^{σ ⋅ k+α}| ≤ 2^{σ ⋅ k+α + 1-w⋅p} ⋅ |‾c(k)| .
	\end{equation}
	Further suppose there are functions $φ_a \colon [i⁻ ‥ i⁺] → ℝ$ and $φ_b \colon [j⁻ ‥ j⁺] → ℝ$
	such that
	\begin{equation}
		\label{a_le_2_pow_phi}
		\begin{aligned}
			\max\bigl(|a(i)|, 2^{σi+α+wp}\bigr) ≤ 2^{-φ_a (i)}, \\
			\max\bigl(|b(j)|, 2^{σj+β+wp}\bigr) ≤ 2^{-φ_b (j)}.
		\end{aligned}
	\end{equation}
	Let $φ_c \colon [i⁻+j⁻ ‥ i⁺+j⁺] → ℝ$ be the min-plus convolution of $φ_a$ and $φ_b$:
	\[ φ_c (k) = ∑_{\substack{(i,j) ∈ 𝐫 \\ i+j=k}} φ_a (i)+φ_b(j). \]
	Then, for each $k ∈ [i⁻+j⁻ ‥ i⁺+j⁺]$, $˜c(k)$ computed by \Cref{algorithm:linear_scaling}
	satisfies
	\[ 
		|˜c(k) - c(k)| ≤ 22t ⋅ 2^{-wp}⋅ 2^{-φ_c (k)} ,
	\]
	where $t = \val(\Int 𝐫, k)+1$
	is the number of terms in the sum
	$∑_{\substack{(i, j) ∈ 𝐫 \\ i+j=k}} a(i) b(j)$.
\end{proposition}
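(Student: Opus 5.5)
The plan is to unwind \Cref{algorithm:linear_scaling} term by term. I assume the algorithm does three things. First, it rescales each coefficient, replacing $a(i)$ by the integer $‾a(i)\approx a(i)/2^{σ i+α}$ and $b(j)$ by $‾b(j)\approx b(j)/2^{σ j+β}$. Second, it computes the integer convolution $‾c(k)=∑_{i+j=k}‾a(i)‾b(j)$ exactly, for instance by Kronecker substitution and FFT. Third, it unscales $‾c(k)$ by the factor $2^{σ k+α+β}$ to get $˜c(k)$.

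Because the integer convolution is exact, the only error sources are the two roundings and the final unscaling. The exponent in \eqref{error_requirement_unscale} should be read with the combined shift $α+β$. The $φ_c$ in the statement should be read as the min-plus convolution of $φ_a$ and $φ_b$, so that $φ_c(k)\le φ_a(i)+φ_b(j)$ whenever $i+j=k$. This inequality is the only property of $φ_c$ I will use.

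\textbf{Step 1 (one product).} Write $‾a(i)=a(i)/2^{σ i+α}+δ_a(i)$ and $‾b(j)=b(j)/2^{σ j+β}+δ_b(j)$. By \eqref{error_requirement_rescale_round}, $|δ_a(i)|\le\min\bigl(1,|a(i)|/2^{σ i+α+1}\bigr)$, and similarly for $δ_b$. Expanding,
\[ 2^{σ k+α+β}‾a(i)‾b(j)-a(i)b(j)=a(i)\,2^{σ j+β}δ_b(j)+b(j)\,2^{σ i+α}δ_a(i)+2^{σ k+α+β}δ_a(i)δ_b(j). \]
I bound each of the three terms by $2^{-wp}\,2^{-φ_a(i)-φ_b(j)}$ using \eqref{a_le_2_pow_phi}:
\begin{itemize}
\item For the first term, use $|δ_b|\le 1$. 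Then $|a(i)|\le 2^{-φ_a(i)}$ and $2^{σ j+β}\le 2^{-φ_b(j)-wp}$ give the bound.
\item The second term is symmetric.
\item For the cross term, again use $|δ|\le1$. The two factors $2^{σ i+α}$ and $2^{σ j+β}$ each carry a $2^{-wp}$, so this term is even smaller.
\end{itemize}
Hence each pair $(i,j)$ contributes error at most $3\cdot2^{-wp}2^{-φ_a(i)-φ_b(j)}\le 3\cdot 2^{-wp}2^{-φ_c(k)}$.

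\textbf{Step 2 (sum over the anti-diagonal).} There are exactly $t=\val(\Int 𝐫,k)+1$ lattice pairs with $i+j=k$ in $𝐫$. Summing Step 1 over them gives
\[ \bigl|2^{σ k+α+β}‾c(k)-c(k)\bigr|\le 3t\cdot 2^{-wp}2^{-φ_c(k)}. \]

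\textbf{Step 3 (unscaling).} The other half of \eqref{error_requirement_rescale_round} gives $|‾a(i)|\le \tfrac32|a(i)|/2^{σ i+α}$, and similarly for $‾b$. Therefore
\[ 2^{σ k+α+β}|‾c(k)|\le\tfrac94\,t\,2^{-φ_c(k)}. \]
The unscaling error from \eqref{error_requirement_unscale} is then at most $2\cdot 2^{-wp}\cdot\tfrac94 t\,2^{-φ_c(k)}$. Adding Steps 2 and 3 with the triangle inequality gives a total of roughly $7.5\,t\,2^{-wp}2^{-φ_c(k)}$. This is comfortably below the claimed $22t\cdot2^{-wp}2^{-φ_c(k)}$, so there is slack for any extra rounding in the algorithm, such as one more rounding of $2^{σ k}$ when $σ$ is a non-integer rational.

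\textbf{Main obstacle.} There is no deep step here; the difficulty is bookkeeping. I must match the algorithm's exact scaling constants ($α$ versus $α+β$) and keep the sign convention straight: $2^{-φ}$ is an upper bound, so smaller $φ$ means larger magnitude. The key point is to use the $\max(\cdot,2^{σ i+α+wp})$ part of \eqref{a_le_2_pow_phi} to absorb the absolute rounding error of at most $1$. That is what makes the error relative to $2^{-φ_c}$ rather than to $|c(k)|$.
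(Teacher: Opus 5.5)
Your proposal is correct and follows the paper's proof: bound each product error using the absolute rounding bound $|\delta|\le 1$, absorbed by the hypothesis $2^{\sigma i+\alpha+wp}\le 2^{-\varphi_a(i)}$; sum over the $t$ anti-diagonal terms using $\varphi_c(k)\le\varphi_a(i)+\varphi_b(j)$; and control the unscaling error through the relative rounding bound on $|\overline{a}(i)|$ and $|\overline{b}(j)|$. The only difference is bookkeeping: you expand the product error exactly into three terms and use the sharp factor $\tfrac32$, whereas the paper uses a two-term telescoping with the looser factor $3$, which is why you get the constant $7.5$ instead of $22$.
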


\begin{algorithm}
	\caption{Polynomial multiplication with linear scaling.}
	\label{algorithm:linear_scaling}
	\begin{algorithmic}[1]
		\Procedure{RescaleRound}{$a \colon [i⁻ ‥ i⁺] → ℝ, σ, α$}
			\For{$i ∈ [i⁻ ‥ i⁺]$}
				\State $‾a(i) ← \round (a / 2^{σ ⋅ i+α})$
			\EndFor
			\State \Return $‾a$
		\EndProcedure
		\Procedure{Unscale}{$‾c \colon [k⁻ ‥ k⁺] → ℝ, σ, α$}
			\For{$k ∈ [k⁻ ‥ k⁺]$}
				\State $˜c(k) ← ‾c(k) ⋅ 2^{σ ⋅ k+α}$
			\EndFor
			\State \Return $˜c$
		\EndProcedure
		\Procedure{RescaledConvolution}{$a, b, 𝐫 = [i⁻ ‥ i⁺] × [j⁻ ‥ j⁺], σ, α, β$}
			\State $‾a ← \RescaleRound{a|_{[i⁻ ‥ i⁺]}, σ, α}$
			\State $‾b ← \RescaleRound{b|_{[j⁻ ‥ j⁺]}, σ, β}$
			\State $‾c ← \IntegerConvolution{‾a, ‾b}$
			\State $˜c ← \Unscale{‾c, σ, α+β}$
		\EndProcedure
	\end{algorithmic}
\end{algorithm}


While both the proposition statement and proof are extremely technical,
the following is the easiest example.
Let $a \colon [n] → ℤ$ and $b \colon[m] → ℤ$
such that for each $i ∈[n]$, $j ∈[m]$, $2^{wp-1}≤ a(i) < 2^{wp}$
and $2^{wp-1}≤ b(j) < 2^{wp}$.

To apply \Cref{prop:linear_scaling}, we need to pick $φ_a$ and $φ_b$
satisfying \Cref{a_le_2_pow_phi}.
For this, $φ_a (i) = φ_b (j) = -p$ for all $i$ and $j$, and $σ=α=β=0$ would suffice.
Then, we get $φ_c (k) = -2p$, and the absolute error
$|˜c(k) - c(k)|$ has the order of magnitude around $22 (m+1) ⋅ 2^{2wp - wp} ∼ m ⋅ 2^{wp}$.
We expect $c(k) ∼ m ⋅ 2^{2wp}$, so this is a relative error of $∼ 2^{-wp}$.
In other words, we get $p$ words of precision, which is just what we expect.

\begin{remark}
	\label{remark_loose_round_requirement}
We note that if $\round(⋅)$ is exactly implemented,
we would have $|‾a(i) - a/2^{σ ⋅ i+α}| ≤ 0.5$.
We pick a looser bound for the same reason as discussed after \Cref{error_requirement}.


\Cref{error_requirement_unscale} is taken directly from \Cref{error_requirement},
but to simplify the analysis, we consider the whole computation of
$‾c(k)⋅2^{σk+α}$ as one operation for the purpose of error analysis.

The constant factor, $22$, is likely extremely pessimistic.
However, in order to show the asymptotic order of the error term,
this will suffice.
\end{remark}

\begin{proof}
	We first consider $‾c$. Since $\IntegerConvolution{⋅}$ is exact,
	we have, for each $k$,
	\begin{equation}
		\label{equation_bar_c}
		‾c(k) 2^{σk+α+β} = ∑_{\substack{(i, j) ∈ 𝐫 \\ i+j=k}} ‾a(i) ‾b(j) 2^{σk+α+β}.
	\end{equation}

	From \Cref{error_requirement_rescale_round},
	we have $|‾a(i)| ≤ 3 ⋅ |a(i)|/2^{σi+α}$.
	so from \Cref{a_le_2_pow_phi},
	$|‾a(i)| ≤ 3 ⋅ 2^{-φ_a (i) - (σi+α)}$.

	From \Cref{a_le_2_pow_phi}, 
	$|b(j)| ≤ 2^{-φ_b (j)}$.

	From \Cref{error_requirement_rescale_round}, we also have
	\begin{align*}
		| ‾a(i) 2^{σi+α} - a(i) | &≤ 2^{σi+α}, \\
		| ‾b(j) 2^{σk+α+β} - b(j) 2^{σi+α} | &≤ 2^{σk+α+β}.
	\end{align*}

	Therefore,
	\begin{align*}
		& \quad | ‾a(i) ‾b(j) 2^{σk+α+β} - a(i) b(j) | \\
		&≤ | ‾a(i) ‾b(j) 2^{σk+α+β} - ‾a(i) b(j) 2^{σi+α} | + | ‾a(i) b(j) 2^{σi+α} - a(i) b(j) | \\
		&= |‾a(i)| ⋅ | ‾b(j) 2^{σk+α+β} - b(j) 2^{σi+α} | + |b(j)|⋅ | ‾a(i) 2^{σi+α} - a(i) | \\
		&≤ 3 ⋅ 2^{-φ_a (i) - (σi+α)} ⋅ 2^{σk+α+β} + 2^{-φ_b (j)} ⋅ 2^{σi+α} \\
		&= 3 ⋅ 2^{-φ_a (i) + σj+β} + 2^{-φ_b (j) + σi+α} \\
		&≤ 3 ⋅ 2^{-φ_a (i) -wp-φ_b (j)} + 2^{-φ_b (j) -wp+φ_a (i)} \\
		&= 4 ⋅ 2^{-φ_a (i)-φ_b(j) -wp}.
	\end{align*}

	Since the sum on the right hand side of \Cref{equation_bar_c} has $t$ terms,
	we have
	\begin{equation}
		\label{error_bound_part1}
		|‾c(k) 2^{σk+α+β} - c(k)| ≤ 2^{-wp} ⋅ 4t ⋅ 2^{-φ_c (k)}.
	\end{equation}

	It remains to look at the rounding error when computing $˜c(k)$ from $‾c(k)$.
	By the assumption in \Cref{error_requirement_unscale}, we have
	\[ |˜c(k)-2^{σk+α+β} ‾c(k) | ≤ 2^{σk+α+β} ‾c(k) ⋅ 2^{1-wp}. \]

	We already derived above that $|‾a(i)| ≤ 3 ⋅ 2^{-φ_a (i)-(σi+α)}$.
	Similarly, $|‾b(j)| ≤ 3 ⋅ 2^{-φ_b (j)-(σj+β)}$.
	Therefore, from \Cref{equation_bar_c},
	\[
		‾c(k) 2^{σk+α+β} ≤ 9t ⋅ 2^{-φ_c (k)},
	\]
	and
	\[ |˜c(k)-2^{σk+α+β} ‾c(k) | ≤ 9t ⋅ 2^{1-wp-φ_c (k)}. \]
	Combining with \Cref{error_bound_part1}, we get
	\[ |˜c(k)-c(k) | ≤ 22t ⋅ 2^{-wp-φ_c (k)}. \tag*{\qedhere} \]
\end{proof}

We also want to analyze the time complexity of \Cref{algorithm:linear_scaling}.
\begin{proposition}
	With notations and assumptions as in \Cref{prop:linear_scaling}.
	Suppose for each $(i, j) ∈ 𝐫$, $-φ_a (i) - (σi+α) ∈ O(p)$ and $-φ_b (j) - (σj+β) ∈ O(p)$.
	Then \Cref{algorithm:linear_scaling} can be implemented to run in
	time $O\bigl(n p \log (np) + (\log w+p \log p) \log(wp)\bigr)$.
\end{proposition}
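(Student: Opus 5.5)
The plan is to bound the cost of the three phases of \Cref{algorithm:linear_scaling} separately: \textproc{RescaleRound}, then \textproc{IntegerConvolution}, then \textproc{Unscale}. The hypothesis $-φ_a(i)-(σi+α) ∈ O(p)$ says exactly that each rescaled mantissa is short. In the proof of \Cref{prop:linear_scaling} we derived $|‾a(i)| ≤ 3 ⋅ 2^{-φ_a(i)-(σi+α)}$, so every $‾a(i)$ is an integer of at most $O(p)$ words, and likewise every $‾b(j)$. With this in hand:
\begin{itemize}
	\item \textbf{Integer convolution.} By Kronecker substitution, as recalled in the introduction, the convolution of $n+1$ and $m+1 ≤ n+1$ integers of $O(p)$ words takes time $O(np\log(np))$. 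Each packed coefficient of $‾c$ has $O(p) + O(\log n / w) = O(p)$ words, since $n ≤ 2^w$.
	\item \textbf{Rescaling.} We need to compute $2^{σ i+α}$ for each $i$.
	\item \textbf{Unscaling.} We need to compute $2^{σ k+α+β}$ for each $k$.
\end{itemize}

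The main obstacle is evaluating the powers $2^{σ i + α}$. Since $σ$ is rational, these are not exact powers of two, and naively computing a transcendental exponential per index would cost $O(n(\log^2 w + p\log p)\log(wp))$, which is too much. Write $σ = u/v$ with $O(1)$-word integers $u,v$, and split $σ i + α = \lfloor σ i \rfloor + α + r_i/v$ with $r_i ∈ [0‥v-1]$. Then $2^{σ i+α} = 2^{\lfloor σ i\rfloor+α}\cdot 2^{r_i/v}$. The integer part is a pure exponent shift, so it costs only $O(p)$ (or $O(1)$ on the exponent field). The fractional part is a geometric progression.

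To evaluate the fractional parts, I would compute $ρ = 2^{σ}$ once, to about $p + O(1)$ words of precision. Using the logarithm/exponential routines from \Cref{sec_prior_works}, this costs $O\bigl((\log^2 w + p\log p)\log(wp)\bigr)$. That matches the additive term in the statement, up to the $\log w$ versus $\log^2 w$ discrepancy, which I would try to absorb by noting that only an $O(w)$-bit seed is needed before switching to Newton iteration. Then $2^{σ i}$ is obtained for all $i$ by successive multiplication by $ρ$, renormalizing the exponent by $\lfloor σ i\rfloor$ at each step.

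The concern with successive multiplication is error accumulation over $n$ steps. I would handle this in one of two ways:
\begin{itemize}
	\item Work with $O(\log n / w) = O(1)$ extra words of guard precision, so that the relative error after $n$ multiplications still meets \Cref{error_requirement_rescale_round} and \Cref{error_requirement_unscale}.
	\item Alternatively, compute $2^{r/v}$ for each residue class, but this only works if $v$ is small. So the guard-precision route is the one I would commit to.
\end{itemize}
Each step is one $O(p)$-word multiplication, costing $O(p\log p)$. Over $O(n)$ indices this gives $O(np\log p) ⊆ O(np\log(np))$.

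With the powers available, the per-index work is cheap:
\begin{itemize}
	\item \textbf{Rounding.} \textproc{RescaleRound} performs one multiplication of $a(i)$ by the precomputed reciprocal power, plus a rounding to an integer. Since the result has $O(p)$ words, this costs $O(p\log p)$.
	\item \textbf{Unscaling.} \textproc{Unscale} does one multiplication of the $O(p)$-word $‾c(k)$ by $2^{σk+α+β}$, followed by rounding to $p$ words.
\end{itemize}
Summing the phases gives $O(np\log(np))$ plus the one-time cost of computing $2^σ$ (and $2^{-σ}$, by one division in $O(p\log p)$). This is the claimed bound.
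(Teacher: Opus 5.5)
Your proposal is correct and is essentially the paper's proof. The paper's proof consists only of observing that $2^{\sigma}$ can be computed once (to $p+1$ words) in $O\bigl((\log^2 w+p\log p)\log(wp)\bigr)$ time; it leaves implicit the repeated multiplication by this power (spelled out in the later \textproc{RescaleRound} lemma) and the Kronecker-substitution convolution, both of which you make explicit. Two small remarks: the paper's argument likewise only yields the $\log^2 w$ additive term, so the $\log w$ in the statement looks like a typo and your ``absorb via an $O(w)$-bit seed'' remark is neither justified nor needed; and since \textproc{RescaleRound} requires absolute error at most $1$ on values as large as $2^{O(wp)}$, the powers must be carried to $O(p)$ words with the constant from the hypothesis, not merely $p+O(1)$ words, which leaves the bound unchanged.
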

We note that, when $n$ is reasonably large,
the $np \log (np)$ term dominates the $(\log w+p \log p) \log(wp)$ term,
so the time complexity is closer to $O(np \log (np))$ in practice.

\begin{proof}
	Recall from \Cref{sec_prior_works} that the logarithm and the exponential can be computed to $p$ words of precision
	in time $O\bigl((\log^2 w+p \log p)\log(wp)\bigr)$.
	So, for a $O(1)$-word rational $σ$,
	$2^σ$ can be computed to $p+1$ words of precision in time $O\bigl((\log^2 w+p \log p)\log(wp)\bigr)$.
\end{proof}

\section{The main algorithm}

\subsection{Overview}

Let $a \colon [n] → 𝔽$ and $b \colon [m] → 𝔽$ be as in \Cref{section:notation}.
If either $a$ or $b$ are all zeroes, the product is zero.
Otherwise, if any of $a(0)$, $a(n)$, $b(0)$ or $b(m)$ is zero,
the algorithm can delete as many zeroes as possible from the two ends of the two coefficient lists,
compute the convolution of the remaining part, and put back the zeroes.
Therefore, without loss of generality, we can assume none of $a(0)$, $a(n)$, $b(0)$ or $b(m)$ is zero.

\begin{definition}[Definition of $f$, $g$, $˘f$, $˘g$, $˘h$]
Let $f \colon [n] → ℤ ∪ \{+∞ \}$ be defined by $f = -⌈\log_2 |a|⌉$.
Similarly, $g \colon [m] → ℤ ∪ \{+∞ \}$, $g = -⌈\log_2 |b|⌉$.
Let $˘f \colon [n] → ℚ$ be the lower convex hull of $f$.
Note that since neither $a(0)$ nor $a(n)$ is zero, none of the elements in $˘f$ is $-∞$.
Define $˘g$ similarly.
Let $˘h \colon [n+m] → ℚ$ be the min-plus convolution of $˘f$ and $˘g$.
\end{definition}

\begin{definition}[Definition of $d(i, j)$]
For each $i ∈[n]$ and $j ∈[m]$, define $d(i,j)= ˘f(i) + ˘g(j) - ˘h(i+j)$.
\end{definition}
By the definition of $˘h$, $d(i,j)≥ 0$.

\begin{definition}[Definition of $i⁻_k$, $i⁺_k$, $j⁻_k$, $j⁺_k$]
	\label{def_ij_plus_minus_k}
For each $k ∈ [n+m]$, consider the set of pairs $(i, j)$ such that
$i+j=k$ and $d(i,j) ≤ w⋅p$.
Note that at least one such pair exist, since by the definition of $h$,
there exists $(i,j)$ such that $i+j=k$ and $d(i,j) = 0$.
Let $(i⁻_k, j⁺_k)$ be the pair with the smallest $i$ (equivalently, largest $j$),
and $(i⁺_k, j⁻_k)$ be the pair with the largest $i$ (equivalently, smallest $j$).
\end{definition}

The high-level idea is the following.
For each $k$, for some $i₁ ≤ i⁻_k ≤ i⁺_k ≤ i₂$,
we will approximately compute $˜c(k) = ∑_{i ∈[i₁ ‥ i₂]} a(i) b(k-i)$.
This is meant to be an approximation to $c(k) = ∑_{i+j=k} a(i) b(j)$, with an error of:
\begin{align}
	|c(k) - ˜c(k)| &≤ ∑_{i ∉ [i₁ ‥ i₂]} |a(i) b(k-i)| \nonumber \\
				   &≤ (m+1) ⋅ \max_{i ∉ [i₁ ‥ i₂]} 2^{-f(i)-g(k-i)} \nonumber \\
				   &≤ (m+1) ⋅ \max_{i ∉ [i₁ ‥ i₂]} 2^{-˘f(i)-˘g(k-i)} \nonumber \\
				   &< (m+1) ⋅ 2^{-˘h(k) - w ⋅ p}.  \label{tilde_c_error_bound}
\end{align}
If we optimistically expect that
\begin{equation}
	\label{optimistic_c_bound}
\begin{aligned}
|c(k)| &≥ Ω(1) ⋅ \max_{i+j=k} |a(i) ⋅ b(j)| \\
\max_{i+j=k} |a(i) ⋅ b(j)| &≥ Ω(1) ⋅ 2^{-˘h(k)},
\end{aligned}
\end{equation}
then we get that the relative error of $˜c(k)$ is bounded by
\[ \frac{|c(k) - ˜c(k)|}{|c(k)|} ≤ (m+1) ⋅ 2^{-w ⋅ p} ⋅ O(1), \]
which is quite good.

Later, in \Cref{sec_discussion},
we will explain why,
even when \Cref{optimistic_c_bound} is not satisfied,
the result, with error estimated in \Cref{tilde_c_error_bound}, remains useful.

Without making any assumption, we make the following definition to formalize the error bound of the result.
\begin{definition}[{Relative Newton error}]
	\label{def_relative_Newton_error}
	Let $a$, $b$, $f$, $g$, $˘f$, $˘g$ and $˘h$ be as above.
	Let $c \colon [n+m] → ℝ$ be the (exactly computed) convolution of $a$ and $b$.
	For any $˜c \colon [n+m] → ℝ$, the relative Newton error of $˜c$ with respect to $c$ is defined to be
	\[ \max_{k ∈[n+m]} \frac{|˜c(k)-c(k)|}{2^{-˘h(k)}}. \]
\end{definition}
Usually, $˜c$ would be the output of our algorithm, to be described below.
The definition can be seen to be similar to the standard definition of a relative floating-point error:
if $c ∈ ℝ ∖ \{ 0 \}$ is some exact value and $˜c ∈ 𝔽$ is an approximation of $c$,
then the relative error of $˜c$ is simply $\frac{|˜c-c|}{|c|}$.
If $2^{-˘h(k)} ≈ |c(k)|$, then the relative Newton error of $˜c$
with respect to $c$ is approximately the maximum relative error of any coefficient $˜c(k)$ with respect to $c(k)$.

We note that our definition of relative Newton error is slightly different from \cite[Section~5]{stablemult}.
This will be discussed in \Cref{sec_discuss_def_relative_Newton_error}.

\subsection{Applying linear scaling to a subrectangle}

The following proposition states that, if $𝐫 = [i⁻ ‥ i⁺] × [j⁻ ‥ j⁺]$ is a rectangle whose longer side is no more than $λ$ times longer than the shorter side, then both functions $˘f|_{[i⁻ ‥ i⁺]}$ and $˘g|_{[j⁻ ‥ j⁺]}$ are within $L¹$ distance of $λδ$ from certain linear functions.

\begin{proposition}
	\label{prop_linear_scaling_on_rectangle}
	Let $𝐫 = [i⁻ ‥ i⁺] × [j⁻ ‥ j⁺]$ be a rectangle.
	Suppose $1 ≤ i⁺-i⁻ ≤ j⁺-j⁻ ≤ λ (i⁺-i⁻)$
	or $1 ≤ j⁺-j⁻ ≤ i⁺-i⁻ ≤ λ (j⁺-j⁻)$
	for a positive integer constant $λ$.
	Furthermore suppose $\max_{(i, j) ∈ 𝐫} d(i, j) ≤ δ$ for a positive integer $δ$.

	Define:
	\begin{gather*}
		σ = \frac{˘f(i⁻)-˘f(i⁺)}{i⁺-i⁻}, \\
		α = -˘f(i⁻)-σi⁻-wp, \\
		β = \min\bigl(-˘g(j⁺)-σj⁺, -˘g(j⁻)-σj⁻\bigr)-wp.
	\end{gather*}

	Then $α$, $β$ and $σ$ are $O(1)$-word rationals 
	such that $σi+α+wp ≤ -˘f(i)$ and $σj+β+wp ≤ -˘g(j)$
	for all $(i, j) ∈ 𝐫$.
	Furthermore,
	\begin{equation}
		\label{discrete_coefficient_magnitude_bound}
		-˘f(i) ≤ σi+α+wp+λδ \text{ and } -˘g(j) ≤ σj+β+wp+λδ
	\end{equation}
	for all $(i, j) ∈ 𝐫$.
\end{proposition}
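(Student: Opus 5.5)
The plan is to reduce everything to two ingredients: the convexity of $˘f$ and $˘g$, and the fact that $d \le δ$ on $𝐫$, which forces increments of $˘f$ and $˘g$ along anti-diagonals to nearly cancel. First the easy parts. $σ$, $α$, $β$ are built from finitely many values of $˘f,˘g$ (rationals whose numerators and denominators are bounded by the $O(1)$-word input exponents and by $n\le 2^w$) and from $wp$, so they are $O(1)$-word rationals.

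For the lower bounds, note that $σi+α+wp = -\bigl(˘f(i⁻) - σ(i-i⁻)\bigr)$ is minus the chord of $˘f$ through $i⁻$ and $i⁺$. Convexity of $˘f$ gives $˘f \le$ chord on $[i⁻ ‥ i⁺]$, hence $σi+α+wp \le -˘f(i)$. For $˘g$, the function $˘g + σ\id$ is convex, so its maximum on $[j⁻ ‥ j⁺]$ is attained at an endpoint. This is exactly the statement $-˘g(j)-σj \ge \min\bigl(-˘g(j⁺)-σj⁺, -˘g(j⁻)-σj⁻\bigr)$, i.e.\ $σj+β+wp \le -˘g(j)$.

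For \Cref{discrete_coefficient_magnitude_bound}, introduce the tilted functions $F(i) = ˘f(i)+σ(i-i⁻)-˘f(i⁻)$ and $G(j) = ˘g(j)+σj$. Both are convex, with $F(i⁻)=F(i⁺)=0$ and $F\le 0$. The claim becomes $-F \le λδ$ on $[i⁻ ‥ i⁺]$ and $\max G - G(j) \le λδ$ on $[j⁻ ‥ j⁺]$, where $\max$ is taken over the endpoints. The key lemma is as follows. If $(i,j),(i',j') ∈ 𝐫$ with $i+j=i'+j'$, then both $d(i,j)$ and $d(i',j')$ lie in $[0,δ]$ and share the same $˘h$, so
\[ \bigl|F(i')-F(i) + G(j')-G(j)\bigr| \le δ , \]
since the $σ$ terms cancel along an anti-diagonal.

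The main step is a case split on which side is shorter. Let $ℓ$ be the shorter side length and $L\le λℓ$ the longer one.
\begin{itemize}
	\item If the $i$-side is shorter ($ℓ=i⁺-i⁻$), apply the lemma to $(i⁻,j)$ and $(i⁺,j-ℓ)$ for every admissible $j$. This shows that every length-$ℓ$ window increment of $G$ satisfies $|G(j)-G(j-ℓ)|\le δ$. Tiling $[j⁻ ‥ j⁺]$ by at most $λ$ such windows, plus convexity to control the partial window, bounds the deviation of $G$ from its endpoint maximum by $λδ$. For $F$ at an interior $i=i⁻+s$, pair $(i⁻,j⁻+ℓ)$ with $(i,j⁻+ℓ-s)$; this gives $F(i)\ge G(j⁻+ℓ-s)-G(j⁻+ℓ)-δ$. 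Convexity of $G$ on the window $[j⁻ ‥ j⁻+ℓ]$ says that its last $s$ steps have at least the average slope. Hence $G(j⁻+ℓ)-G(j⁻+ℓ-s) \le$ something controlled by the window increments, bounded via the $δ$-window estimate.
	\item If the $j$-side is shorter, the roles swap. Now $F$ has nearly zero increments over every window of length $ℓ=j⁺-j⁻$, and chaining at most $λ$ such windows across $[i⁻ ‥ i⁺]$, together with convexity of $F$, gives $-F\le λδ$. For $G$, I will pair $(i,j⁻)$ with $(i-ℓ,j⁺)$ to compare $G(j⁺)-G(j⁻)$ with an $F$-increment, and then use convexity on the short interval.
\end{itemize}

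The main obstacle is the constant. The naive chaining above produces bounds like $2δ$ or $(λ+1)δ$ rather than exactly $λδ$. The difficulty comes from combining one lemma application with a convexity step, or a partial window with full windows. I would first try to choose the anti-diagonal pairs so that a single lemma application is combined only with convexity, with no extra $δ$. If that fails, I would sharpen by using one-sided information: $d\ge 0$ at one point and $d\le δ$ at the other, rather than the symmetric $|\cdot|\le δ$. Because $F\le0$, $F$ vanishes at both ends, and the window increments are one-sided, this should remove the surplus $δ$ and give precisely $λδ$.
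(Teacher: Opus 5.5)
Your lower bounds and the anti-diagonal relation $F(i')-F(i)+G(j')-G(j)=d(i',j')-d(i,j)$ are correct. The main step, however, has a genuine gap: window \emph{increments} cannot control the \emph{variation} of a convex function. In the case $\ell=i^+-i^-$, the only thing you extract about $G$ is $|G(j)-G(j-\ell)|\le\delta$, from the corner pairs $(i^-,j)$ and $(i^+,j-\ell)$. This constrains $G$ only at points $\ell$ apart, and convexity says nothing about how deep $G$ dips in between. For $\lambda=1$ the single constraint $|G(j^+)-G(j^-)|\le\delta$ allows an arbitrarily deep V-shaped $G$. For $\lambda=2$ and $j^+-j^-=\ell+1$, a V with slopes $\mp\delta$ meets both increment constraints yet dips by about $\delta\ell/2$. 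So ``tiling plus convexity'' gives no bound independent of $\ell$, let alone $\lambda\delta$.

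The same objection applies to $F$ in your second case. There, moreover, the lemma only puts $F(i+\ell)-F(i)$ within $\delta$ of $G(j^+)-G(j^-)$, not within $\delta$ of $0$. It also applies to ``convexity on the short interval'' for $G$ in that case. Separately, your inequality for $F$ has the $G$-difference reversed: the pairing gives $F(i)\ge G(j^-+\ell)-G(j^-+\ell-s)-\delta$. With your sign the average-slope fact points the wrong way; with the correct sign it yields only $F(i)\ge-(1+s/\ell)\delta$.

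The missing idea is the paper's Lemma~\ref{lemma_common_slope} (via Lemmas~\ref{bound_var_sum_1} and~\ref{bound_var_sum_2}). You should use the \emph{whole} anti-diagonal of a square window together with the sign information from the tilt. On the square $[i^-\mathbin{.\,.}i^+]\times[j_0\mathbin{.\,.}j_0+\ell]$, write $F(i^-+s)+G(j_0+\ell-s)=C+d_s$ with $d_s\in[0,\delta]$.
\begin{itemize}
\item Because $F\le 0$ and $F(i^-)=F(i^+)=0$, you get $G(j_0+\ell-s)\ge C\ge\max\bigl(G(j_0),G(j_0+\ell)\bigr)-\delta$. Hence $\var G\le\delta$ on every window.
\item Comparing instead with the anti-diagonal endpoint where $G$ is larger (by convexity $G$ stays below that value on the window) gives $F(i^-+s)\ge d_s-d_{\mathrm{end}}\ge-\delta$.
\end{itemize}
Once each window has variation at most $\delta$, $\lambda$ overlapping windows give exactly $\lambda\delta$, and no one-sided sharpening is needed. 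This is how the paper argues when $i^+-i^-\le j^+-j^-$; it dismisses the other case as similar.

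When the $j$-side is shorter, $\sigma$ is a chord slope over the long side and does not make $F$ vanish at the ends of each window, so the argument above does not apply directly. Instead, run the square argument with the window-independent slope $\tau=-(\breve g(j^+)-\breve g(j^-))/\ell$, which is admissible in Lemma~\ref{lemma_common_slope} on every window. Chaining the windows gives $\var(\breve f+\tau\id)\le\lambda\delta$. Re-tilting a convex function so that its endpoint values agree cannot increase its variation, so $\var(\breve f+\sigma\id)\le\lambda\delta$. For $\breve g$, convexity places $\sigma$ between the corresponding slopes of the first and last windows. Hence $\sigma$ is admissible on one of them, and $\var(\breve g+\sigma\id)\le\delta$.
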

For such a rectangle $𝐫$, we intend to apply \Cref{prop:linear_scaling} afterwards.
The conclusion of this proposition corresponds to the hypothesis \Cref{a_le_2_pow_phi}.

We need a few technical lemmas.

\begin{lemma}
	\label{bound_var_sum_1}
	Let $i⁻ ≤ i⁺$ be integers.
	Let $f, g \colon [i⁻ ‥ i⁺] → ℝ$ be convex functions.
	Suppose $f(i⁺) ≤ f(i⁻)$ and $g(i⁺) ≤ g(i⁻)$.
	Then
	\begin{equation}
		\label{var_bound_1}
		\max(\var f,\var g) ≤ \var(f+g).
	\end{equation}
\end{lemma}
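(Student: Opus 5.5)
The plan is to use the shape of convex functions on an interval whose right endpoint is no higher than the left: each such function attains its maximum at the left endpoint $i⁻$, while its minimum sits at some interior point. By symmetry it suffices to prove $\var f ≤ \var(f+g)$; the bound for $g$ is identical with the roles swapped.

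\textbf{Step 1 (location of the maximum).} A convex function on $[i⁻ ‥ i⁺]$ attains its maximum at an endpoint. Indeed, every $i$ in between is a convex combination of $i⁻$ and $i⁺$, and discrete convexity (nondecreasing successive differences) gives $f(i) ≤ \max(f(i⁻), f(i⁺))$. Since $f(i⁺) ≤ f(i⁻)$, we get $\max f = f(i⁻)$, and likewise $\max g = g(i⁻)$. The sum $f+g$ is again convex and satisfies $(f+g)(i⁺) ≤ (f+g)(i⁻)$, so $\max(f+g) = f(i⁻)+g(i⁻)$.

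\textbf{Step 2 (choice of minimizer).} Let $i_0$ be a minimizer of $f$, and take the largest one. I claim $g$ is nonincreasing on $[i⁻ ‥ i_0]$. The successive differences $g(i+1)-g(i)$ are nondecreasing, so it is enough that the difference just before $i_0$ is $≤ 0$, namely $g(i_0)-g(i_0-1) ≤ 0$ when $i_0 > i⁻$.

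This does not come from $g$ alone: $g$ could be increasing there. So I will compare $f+g$ at $i_0$ only with its value at $i⁻$:
\[ \var(f+g) ≥ (f+g)(i⁻) - (f+g)(i_0) = \var f + \bigl(g(i⁻) - g(i_0)\bigr). \]
It then suffices to show $g(i_0) ≤ g(i⁻)$. This follows from Step 1 applied to $g$: the maximum of $g$ is $g(i⁻)$, so every value of $g$, and in particular $g(i_0)$, is at most $g(i⁻)$. Thus Step 2 needs no monotonicity at all, only Step 1.

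\textbf{Main obstacle.} The only genuine point is recognizing that $\max g = g(i⁻)$, which comes from the endpoint hypothesis combined with convexity. Once that is in hand, the inequality is a single line:
\[ \var(f+g) ≥ \var f + g(i⁻) - g(i_0) ≥ \var f. \]
The one thing to handle with care is the endpoint-maximum fact for discrete convex sequences. I would prove it by writing $f(i)$ as at most the linear interpolation of $f(i⁻)$ and $f(i⁺)$, obtained by summing the nondecreasing differences. The degenerate case $i⁻ = i⁺$ is trivial, since then every variation is zero.
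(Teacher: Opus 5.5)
Your proof is correct and essentially the paper's: the paper normalizes to $f(i^-)=g(i^-)=0$ and uses convexity with the endpoint hypothesis to get $f,g\le 0$, so both maxima sit at $i^-$ and $f+g\le f$, $f+g\le g$ pointwise; that is exactly your inequality $g(i_0)\le g(i^-)$ at a minimizer of $f$. The tentative claim at the start of your Step 2, that $g$ is nonincreasing up to $i_0$, is false in general and unused, as you note yourself, so just delete it.
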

\begin{proof}
	Without loss of generality, by adding a constant to $f$ and $g$,
	we can assume $f(i⁻)=g(i⁻)=0$.
	Then $f(i⁺) ≤ 0$ and $g(i⁺) ≤ 0$.
	By the convexity of $f$ and $g$, $f(i) ≤ 0$ and $g(i) ≤ 0$ for every $i ∈ [i⁻ ‥ i⁺]$.
	Therefore, $\max(-\min f,-\min g)≤-\min(f+g)$.
	Since $\max f =\max g =\max(f+g)=0$, this is equivalent to \Cref{var_bound_1}.
\end{proof}

\begin{lemma}
	\label{bound_var_sum_2}
	Let $i⁻ ≤ i⁺$ be integers.
	Let $f, g \colon [i⁻ ‥ i⁺] → ℝ$ be convex functions.
	Suppose $f(i⁺) ≥ f(i⁻)$ and $g(i⁺) ≥ g(i⁻)$.
	Then
	\[ \max(\var f,\var g) ≤ \var(f+g).  \]
\end{lemma}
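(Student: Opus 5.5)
The plan is to reduce \Cref{bound_var_sum_2} to \Cref{bound_var_sum_1} by reflecting the domain. Define $\tilde f, \tilde g \colon [i⁻ ‥ i⁺] → ℝ$ by
\[ \tilde f(i) = f(i⁻+i⁺-i), \qquad \tilde g(i) = g(i⁻+i⁺-i). \]
The map $i \mapsto i⁻+i⁺-i$ is an affine bijection of $[i⁻ ‥ i⁺]$ onto itself, so $\tilde f$ and $\tilde g$ are again convex. For sequences, convexity means nonnegative second differences, and these are unchanged under reversal of the index.

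The reflection swaps the endpoints. Hence the hypothesis $f(i⁺) ≥ f(i⁻)$ becomes $\tilde f(i⁺) = f(i⁻) ≤ f(i⁺) = \tilde f(i⁻)$, and the same holds for $g$. So $\tilde f$ and $\tilde g$ satisfy exactly the hypotheses of \Cref{bound_var_sum_1}, and we obtain
\[ \max(\var \tilde f, \var \tilde g) ≤ \var(\tilde f+\tilde g). \]

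It remains to observe that $\var$ depends only on the set of values a function takes. Reindexing by a bijection does not change that set, so $\var \tilde f = \var f$ and $\var \tilde g = \var g$. Likewise $\tilde f+\tilde g$ is the reflection of $f+g$, so $\var(\tilde f+\tilde g) = \var(f+g)$. Substituting these equalities gives the claim.

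There is no real obstacle. The only point needing a line of justification is that reflection preserves convexity of a sequence, which follows because the second difference at $i$ of the reflected sequence equals the second difference of the original at $i⁻+i⁺-i$.

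Alternatively, one can repeat the direct argument of \Cref{bound_var_sum_1}. Normalize $f(i⁺)=g(i⁺)=0$; then $f(i⁻),g(i⁻) ≤ 0$, and convexity forces $f, g ≤ 0$ throughout, so $\max f = \max g = \max(f+g) = 0$ and $\max(-\min f, -\min g) ≤ -\min(f+g)$.
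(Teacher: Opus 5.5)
Your proof is correct and is essentially the paper's argument. The paper's proof just says to repeat the proof of \Cref{bound_var_sum_1} with the roles of $i^-$ and $i^+$ interchanged, which is exactly your alternative direct argument; your reflection $i \mapsto i^- + i^+ - i$ is a formal way of carrying out that same interchange.
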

\begin{proof}
	Identical to the proof of the previous lemma, interchanging the role of $i⁻$ and $i⁺$.
\end{proof}

\begin{lemma}
	\label{lemma_common_slope}
	Let $i⁻ ≤ i⁺$ be integers.
	Let $f, g \colon [i⁻ ‥ i⁺] → ℝ$ be convex functions.
	Let $σ$ be any value between $\frac{f(i⁺)-f(i⁻)}{i⁺-i⁻}$ and $\frac{g(i⁻)-g(i⁺)}{i⁺-i⁻}$.
	Then \[ \max\bigl(\var (f-σ ⋅ \id), \var (g+σ ⋅ \id)\bigr) ≤ \var (f+g). \]
\end{lemma}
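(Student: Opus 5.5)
The plan is to reduce to \Cref{bound_var_sum_1} and \Cref{bound_var_sum_2} by defining $F = f - σ⋅\id$ and $G = g + σ⋅\id$. Both are convex, since adding a linear function preserves convexity. Their sum is $F+G = f+g$, so it suffices to show that $F$ and $G$ satisfy the endpoint hypotheses of one of the two lemmas. Once that holds, the lemma gives $\max(\var F, \var G) ≤ \var(F+G) = \var(f+g)$, which is exactly the claim.

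Write $Δ = i⁺-i⁻$. If $Δ = 0$, every variance is zero and there is nothing to prove, so assume $Δ ≥ 1$. Set $s_f = \frac{f(i⁺)-f(i⁻)}{Δ}$ and $s_g = \frac{g(i⁻)-g(i⁺)}{Δ}$, so that $σ$ lies between $s_f$ and $s_g$. The endpoint differences are
\[ F(i⁺)-F(i⁻) = Δ(s_f - σ), \qquad G(i⁺)-G(i⁻) = Δ(σ - s_g). \]
This gives two cases.
\begin{itemize}
\item If $s_f ≤ σ ≤ s_g$, then $F(i⁺) ≤ F(i⁻)$ and $G(i⁺) ≤ G(i⁻)$, so \Cref{bound_var_sum_1} applies.
\item If $s_g ≤ σ ≤ s_f$, then $F(i⁺) ≥ F(i⁻)$ and $G(i⁺) ≥ G(i⁻)$, so \Cref{bound_var_sum_2} applies.
\end{itemize}
In either case $F$ and $G$ move in the same direction between the endpoints, which is what the lemmas require.

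There is no real obstacle here. The only care needed is the sign bookkeeping: the slope of $g$ enters with a minus sign, which is why the shift is $+σ$ for $g$ and $-σ$ for $f$. One should also note that "between" allows either ordering of $s_f$ and $s_g$, which is why both earlier lemmas are needed.
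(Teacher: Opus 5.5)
Your proposal is correct and follows the paper's proof essentially verbatim. You shift to $f-\sigma\cdot\id$ and $g+\sigma\cdot\id$, which leaves $f+g$ unchanged and preserves convexity, and then send the case $s_f \le \sigma \le s_g$ to \Cref{bound_var_sum_1} and the case $s_g \le \sigma \le s_f$ to \Cref{bound_var_sum_2}, exactly as the paper does, with your endpoint-difference formulas and the separate treatment of $i^- = i^+$ spelling out what the paper leaves as a one-line check.
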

See \Cref{fig_common_slope} for an illustration.
We require $i⁻<i⁺$ only for $σ$ to be well-defined. When $i⁻=i⁺$, the statement is true for every $σ$.

	\begin{figure}
		\centering
		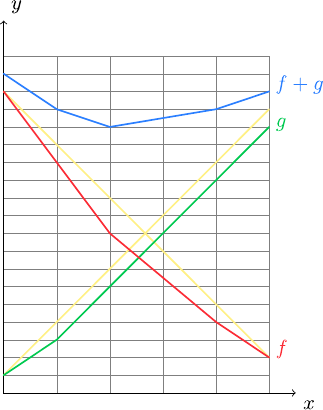
		\caption{
	Illustration for the proof of \Cref{lemma_common_slope}.
	Here $f \colon [5] → ℝ$ is defined by $(f(0),…, f(5)) = (17, 13, 9, 6.5, 4, 2)$
	and $g \colon [5] → ℝ$ is defined by $(g(0),…, g(5)) = (1, 3, 6, 9, 12, 15)$.
	We have $\var (f+g) = 3$.
	The lemma's conclusion implies both $f$ and $g$ are well approximated by affine functions,
	drawn as yellow straight lines.
}
		\label{fig_common_slope}
	\end{figure}

\begin{proof}
	Let $‾f = f-σ ⋅\id$
	and $‾g = g+σ ⋅\id$.
	Note that $f+g=‾f+‾g$.
	Since $f$ and $g$ are both convex, so are $‾f$ and $‾g$.

	If $\frac{f(i⁺)-f(i⁻)}{i⁺-i⁻} ≤ σ ≤ \frac{g(i⁻)-g(i⁺)}{i⁺-i⁻}$,
	then $f(i⁺)-σ ⋅ i⁺ ≤ f(i⁻)-σ ⋅ i⁻$
	and $g(i⁺)+σ ⋅ i⁺ ≤ g(i⁻)+σ ⋅ i⁻$,
	we apply \Cref{bound_var_sum_1}.

	Otherwise, $\frac{f(i⁺)-f(i⁻)}{i⁺-i⁻} ≥ σ ≥ \frac{g(i⁻)-g(i⁺)}{i⁺-i⁻}$,
	we apply \Cref{bound_var_sum_2}.
\end{proof}

\begin{proof}[Proof of \Cref{prop_linear_scaling_on_rectangle}]
	Without loss of generality, we handle the case $1 ≤ i⁺-i⁻ ≤ j⁺-j⁻ ≤ λ (i⁺-i⁻)$.
	The opposite case is similar.

	From the definition of $α$, it easily follows that $α = -˘f(i⁺)-σi⁺-wp$ as well.
	By the convexity of $˘f$ and $˘g$, we
	immediately get
	\begin{equation}
		\label{lower_bound_minus_fg}
		σi+α+wp ≤ -˘f(i)\text{ and }σj+β+wp ≤ -˘g(j)
	\end{equation}
	for all $(i, j) ∈ 𝐫$.

	It remains to prove \Cref{discrete_coefficient_magnitude_bound}.
	Since \Cref{lower_bound_minus_fg} is attained for at least one $(i, j) ∈ 𝐫$,
	it suffices for us to prove 
	\[ \var (-˘f-σ ⋅\id) ≤ λδ\text{ and } \var (-˘g-σ ⋅\id) ≤ λδ. \]

	First, we prove the case where $λ = 1$, that is, $𝐫$ is a square.
	Set $k = i⁺+j⁺$. Since $𝐫$ is a square, $k = i⁻+j⁻$ as well.

	Consider the function $η \colon [j⁻ ‥ j⁺] → ℝ$,
	$η(j) = ˘f(k-j) + ˘g(j)$.
	By the definition of $˘h$, $η(j) = ˘h(k) + d(k-j,j)$.
	From the assumptions, we have $0 ≤ d(k-j,j) ≤ δ$ for every $j ∈[j⁻ ‥ j⁺]$.
	Therefore, $\var η ≤ δ$.

	By \Cref{lemma_common_slope},
	we have $\var (˘f+σ ⋅\id) ≤ δ$ and $\var (˘g+σ ⋅\id) ≤ δ$.
	The result immediately follows.

	Now suppose $λ>1$.
	Without loss of generality, suppose $λ = ⌈\frac{j⁺-j⁻}{i⁺-i⁻}⌉$.
	We note that the definition of $σ$ and $α$ is independent of $j⁻$ and $j⁺$.
	
	Set $ℓ = i⁺-i⁻ ≥ 1$.
	We split the range $[j⁻ ‥ j⁺]$ into $λ$ overlapping ranges,
	each has $ℓ+1$ elements. Specifically,
	for $p ∈ [1 ‥ λ-1]$, set $[j⁻_p ‥ j⁺_p] = [j⁻+pℓ-ℓ ‥ j⁻+pℓ]$,
	and for $p = λ$, set $[j⁻_λ ‥ j⁺_λ] = [j⁺-ℓ ‥ j⁺]$.
	Then, for each $p ∈ [1 ‥ λ]$,
	the rectangle $[i⁻ ‥ i⁺] × [j⁻_p ‥ j⁺_p]$ is a square,
	so the proof above shows $\var(˘g+σ ⋅ \id)|_{[j⁻_p ‥ j⁺_p]} ≤ δ$.
	Since the ranges $[j⁻_p ‥ j⁺_p]$ are overlapping,
	we therefore have $\var(˘g+σ ⋅ \id) ≤ λδ$,
	and the result follows.
\end{proof}

\subsection{The algorithm}
\label{sec_main_algorithm}

Here is how the values $˜c(k)$ defined above can be approximated.
Let $𝐫_k$ be the square $[i⁻_k ‥ i⁺_k] × [j⁻_k ‥ j⁺_k]$.
Let 
\begin{equation} \label{kappa_ge_4} κ>4 \end{equation}
be a constant.
Let $t$ be a positive integer that we will select later.
We iteratively define a sequence $‾k \colon [t] → ℤ$ as follows.
First, $‾k(0) = 0$.
Then, for each $q ∈ \{ 0, 1,… \}$, when $‾k(q)$ is already defined and $‾k(q) ≤ n+m$,
we define $‾k(q+1)$ as the largest integer such that
$‾k(q) < ‾k(q+1) ≤ n+m+1$ and
\begin{equation}
	\label{max_d_corner_le}
	\max\Bigl(d\bigl(i⁻_{‾k(q)},j⁺_{‾k(q+1)-1}\bigr),d\bigl(i⁺_{‾k(q+1)-1},j⁻_{‾k(q)}\bigr)\Bigr) ≤ κ ⋅ w ⋅ p.
\end{equation}
This is well-defined because, when $‾k(q+1) = ‾k(q)+1$,
by the definitions of $(i⁻,j⁺)$ and $(i⁺,j⁻)$,
\[ \max\Bigl(d\bigl(i⁻_{‾k(q)},j⁺_{‾k(q+1)-1}\bigr),d\bigl(i⁺_{‾k(q+1)-1},j⁻_{‾k(q)}\bigr)\Bigr) ≤ w ⋅ p, \]
which is $<κ ⋅ w ⋅ p$ because $κ>1$.

The sequence stops at the first entry equal to $n+m+1$.
In other words, $t$ is selected such that $‾k(t) = n+m+1$.
Because the sequence is strictly increasing, it terminates after finitely many terms.
More specifically, $1 ≤ t ≤ n+m+1$.

Given the sequence $‾k$, we use \Cref{algorithm:block_division}
to compute the values $˜c(k)$ for all $k ∈[n+m]$.

\begin{algorithm}
	\caption{The definition of \textproc{AutoRescaledConvolution}.}
	\label{algorithm:auto_rescaled_convolution}
	\begin{algorithmic}[1]
		\Procedure{AutoRescaledConvolution}{$a, b, k⁻, k⁺$}
			\State $i⁻ ← i⁻_{k⁻}$
			\State $i⁺ ← i⁺_{k⁺}$
			\State $j⁻ ← j⁻_{k⁻}$
			\State $j⁺ ← j⁺_{k⁺}$
			\State $σ ← \frac{˘f(i⁻)-˘f(i⁺)}{i⁺-i⁻}$
			\State $α ← -˘f(i⁻)-σi⁻-wp$
			\State $β ← \min\bigl(-˘g(j⁺)-σj⁺, -˘g(j⁻)-σj⁻\bigr)-wp$  \Comment{As in \Cref{prop_linear_scaling_on_rectangle}}
			\State $‾c ← \RescaledConvolution{a, b, [i⁻‥ i⁺] ×[j⁻‥ j⁺], σ, α, β}$
			\State \Return $‾c|_{[k⁻ ‥ k⁺]}$
		\EndProcedure
	\end{algorithmic}
\end{algorithm}

	\begin{figure}
		\centering
		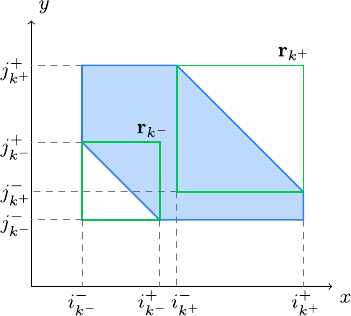
		\caption{
	Illustration for a single invocation of $\textproc{AutoRescaledConvolution}$.
}
		\label{fig_auto_rescaled_convolution}
	\end{figure}

\begin{algorithm}
	\caption{The main algorithm. Computation of the values $˜c(k)$.}
	\label{algorithm:block_division}
	\begin{algorithmic}[1]
		\Procedure{ApproximateConvexConvolution}{$a, b, ‾k$}
			\For{$q ∈ [t-1]$}
				\State $k⁻ ← ‾k(q)$
				\State $k⁺ ← ‾k(q+1)-1$
				\State $‾c ← \AutoRescaledConvolution{a, b, k⁻, k⁺}$
				\For{$k ∈ [k⁻ ‥ k⁺]$}
					\State $˜c(k) ← ‾c(k)$
				\EndFor
			\EndFor
			\State \Return $˜c$
		\EndProcedure
	\end{algorithmic}
\end{algorithm}

	\begin{figure}
		\centering
		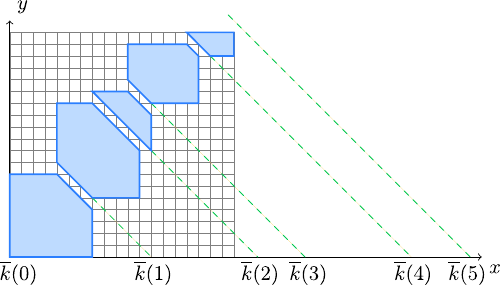
		\caption{
	Illustration for \Cref{algorithm:block_division}, the main algorithm.
}
		\label{fig_main_algorithm}
	\end{figure}

Visually, the construction of the sequence $‾k$,
and the subsequent invocations of $\AutoRescaledConvolution{a, b, k⁻, k⁺}$
can be understood as finding a collection of rectangles
that together cover all points $(i, j)$ with $d(i, j) ≤ w ⋅ p$.
See \Cref{fig_main_algorithm} for an illustration.

To explain what this illustration means:
Consider a single blue polygon.
See \Cref{fig_auto_rescaled_convolution}.
This polygon represents a single invocation of the procedure $\AutoRescaledConvolution{a, b, k⁻, k⁺}$,
with the values $i⁻_{k⁻}$, $i⁺_{k⁻}$, $i⁻_{k⁺}$, etc. marked on the diagram.
Since we only need the output values at indices $k$ satisfying $k⁻ ≤ k ≤ k⁺$,
we don't draw the whole rectangle, but rather, we cut away two triangles:
the first being $\{(i, j) ∈ 𝐫_{k⁻}, i+j<k⁻\}$,
and the second being $\{(i, j) ∈ 𝐫_{k⁺}, i+j>k⁺\}$.
For each integral point $(i, j)$ in the drawn polygon, the product $a(i) b(j)$ contributes to
the calculation of $˜c(i+j)$.
The time complexity of such a single invocation is roughly linearithmic in the \emph{perimeter} of the polygon,
as we will prove in \Cref{assume_rescaled_convolution}.
So, the total time complexity of the algorithm is roughly linearithmic in the total perimeter of all polygons.
This total perimeter is proved to be linear in $n+m$ in \Cref{sum_side_rectangles_le}.

The construction of the sequence $‾k$ as described above
greedily constructs the blue polygons from the bottom-left corner to the top-right corner.
From \Cref{max_d_on_corner},
it is clear that the resulting collection covers all points $(i, j)$ satisfying $d(i, j) ≤ wp$
and does not cover any $(i, j)$ satisfying $d(i, j) > κwp$.
This fact is a prerequisite to apply \Cref{assume_rescaled_convolution}.

\subsection{Technical lemmas}

We will need to prove the correctness and time complexity
of this algorithm.
To do so, we need several technical lemmas.

For $k ∈ [1 ‥ n+m-1]$, define $e(k) = ˘h(k+1) - 2 ˘h(k) + ˘h(k-1)$.
Because $˘h$ is convex, $e(k) ≥ 0$ for every $k$.

\begin{theorem}
	\label{main_algorithm_accuracy}
	The relative Newton error of the sequence $˜c$ returned by the algorithm
	$\textproc{ApproximateConvexConvolution}$ is at most $23(m+1) 2^{-wp}$.
\end{theorem}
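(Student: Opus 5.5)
Fix $k ∈ [n+m]$ and let $q$ be the unique index with $‾k(q) ≤ k ≤ ‾k(q+1)-1$. Write $k⁻ = ‾k(q)$, $k⁺ = ‾k(q+1)-1$, and let $𝐫 = [i⁻_{k⁻} ‥ i⁺_{k⁺}] × [j⁻_{k⁻} ‥ j⁺_{k⁺}]$ be the rectangle used by \textproc{AutoRescaledConvolution}. The value $˜c(k)$ is the output of \Cref{algorithm:linear_scaling} on $𝐫$. I will split the error into two parts:
\[ |˜c(k)-c(k)| ≤ |˜c(k) - c_𝐫(k)| + |c_𝐫(k) - c(k)|, \]
where $c_𝐫(k) = ∑_{(i,j)∈𝐫,\, i+j=k} a(i)b(j)$. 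The goal is to bound each part by a multiple of $(m+1)2^{-wp-˘h(k)}$ so that the constants add up to $23$.

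\emph{Truncation part.} First I show that every $(i,j)$ with $i+j=k$ and $d(i,j) ≤ wp$ lies in $𝐫$. Such a point satisfies $i⁻_k ≤ i ≤ i⁺_k$. It remains to check $i⁻_{k⁻} ≤ i⁻_k$ and $i⁺_k ≤ i⁺_{k⁺}$, i.e.\ that $i^{±}_k$ are monotone in $k$. For this I use the convexity of $d$ restricted to the line $i+j=k$, together with the convexity of $˘h$. The sublevel set $\{d ≤ wp\}$ is the region between two curves. Concretely, if $(i,k-i)$ has $d ≤ wp$, then moving to $k+1$ one can step in $j$ or in $i$ and stay close; I would argue this via the convexity of $˘f,˘g$ along the diagonals.

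Given this containment, every omitted term has $d(i,j) > wp$, so
\[ |a(i)b(j)| ≤ 2^{-f(i)-g(j)} ≤ 2^{-˘f(i)-˘g(j)} < 2^{-˘h(k)-wp}. \]
There are at most $m+1$ omitted terms, which gives the bound of \Cref{tilde_c_error_bound} with constant $1$.

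\emph{Computation part.} I apply \Cref{prop:linear_scaling} with $φ_a = ˘f - \mathrm{const}$, more precisely $φ_a(i) = ˘f(i)$ up to the slack needed for \Cref{a_le_2_pow_phi}. Since $|a(i)| ≤ 2^{-f(i)} ≤ 2^{-˘f(i)}$, and \Cref{prop_linear_scaling_on_rectangle} gives $σi+α+wp ≤ -˘f(i)$ (similarly for $b$), the choice $φ_a = ˘f|$, $φ_b = ˘g|$ satisfies \Cref{a_le_2_pow_phi}. The only requirement is that \Cref{prop_linear_scaling_on_rectangle}'s hypothesis $1 ≤$ side lengths holds; degenerate rectangles are handled trivially or by the same convexity inequality.

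Then $φ_c(k) = \min_{i+j=k,(i,j)∈𝐫}(˘f(i)+˘g(j)) ≥ ˘h(k)$. Note that $φ_c$ should be a min-plus convolution, and the paper's displayed ``$∑$'' is a typo for $\min$. Since the number of terms satisfies $t ≤ m+1$, the proposition yields
\[ |˜c(k)-c_𝐫(k)| ≤ 22(m+1)2^{-wp-˘h(k)}. \]
Adding the truncation bound gives $23(m+1)2^{-wp-˘h(k)}$. Dividing by $2^{-˘h(k)}$ and maximizing over $k$ yields the claim.

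\emph{Main obstacle.} I expect the main obstacle to be the containment step: the monotonicity $i⁻_{k} ≤ i⁻_{k+1}$ and $i⁺_k ≤ i⁺_{k+1}$ (and the analogous statements for $j$). My plan is to show that $d(i,j) ≤ wp$ and $i+j=k$ imply that $(i,j+1)$ or $(i+1,j)$ has $d ≤ wp$. I would derive this from the fact that $˘h(k+1)-˘h(k)$ lies between the left and right slopes of $˘f$ at $i$ and of $˘g$ at $j$, where the points achieving $˘h$ are the ones with matched slopes. If this fails, I would instead argue directly that the sets $\{i : d(i,k-i) ≤ wp\}$ are intervals whose endpoints move monotonically, using convexity of $˘f$, $˘g$ and $˘h$.
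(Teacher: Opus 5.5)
Your proposal is correct and is essentially the paper's proof. It uses the same split into the rounding error from \Cref{prop:linear_scaling} (with $\varphi_a=\breve f$, $\varphi_b=\breve g$ and $t\le m+1$; you are right that $\varphi_c$ should be a minimum) and a truncation error of at most $(m+1)2^{-wp-\breve h(k)}$. The one difference is that the paper does not prove that every point with $i+j=k$ and $d\le wp$ lies in $\mathbf{r}$, but takes the monotonicity of $i^\pm_k$, $j^\pm_k$ from the monotone-paths lemma of \cite{bringmann_2023_arxiv}.

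If you prove that monotonicity yourself, two fixes are needed:
\begin{itemize}
\item Your forward step from line $k$ to line $k+1$ only gives monotonicity of $i^+_k$ and $j^+_k$. For $i^-_k$ and $j^-_k$ you also need the mirrored step from line $k+1$ back to line $k$.
\item The inequality that drives the forward step is $\min\bigl(\breve f(i+1)-\breve f(i),\,\breve g(j+1)-\breve g(j)\bigr)\le\breve h(k+1)-\breve h(k)$ for every $i+j=k$, obtained by comparing slopes with an optimal pair. The bracketing of $\breve h(k+1)-\breve h(k)$ between left and right slopes that you invoke holds only at optimal pairs.
\end{itemize}
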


As explained in \Cref{prop:linear_scaling}, the constant factor is likely to be much better in practice.

In a high level, there are only two sources of error: of omitting terms in the sum, and of floating-point error.
The former is bounded by formalizing \Cref{tilde_c_error_bound},
and the latter is handled by applying \Cref{prop:linear_scaling}.

\begin{proof}
	Fix any $k ∈ [n+m]$. 
	Expanding \Cref{def_relative_Newton_error},
	we need to prove
	\[ |˜c(k)-c(k)| ≤ 23(m+1) ⋅ 2^{-wp-˘h(k)}. \]

	Let $q$ be the unique value of $q ∈ [t-1]$ satisfying $‾k(q) ≤ k < ‾k(q+1)$.
	Let $k⁻ = ‾k(q)$ and $k⁺ = ‾k(q+1)-1$.
	Let $𝐫 = [i⁻_{k⁻}‥ i⁺_{k⁺}]× [j⁻_{k⁻}‥ j⁺_{k⁺}]$.
	By \Cref{prop:linear_scaling}, we have
	\[ \abs[\bigg]{ ˜c(k)-∑_{\substack{(i,j) ∈ 𝐫 \\ i+j=k}} a(i) b(j) } ≤ 22 (m+1) 2^{-wp-˘h(k)}. \]

	Recall, by the definition of $‾k$, $k⁻ ≤ k ≤ k⁺$.
	From \cite[Lemma~18, Monotone Paths]{bringmann_2023_arxiv}
	and the definitions of the sequences $i⁻_k$, $i⁺_k$, $j⁻_k$, $j⁺_k$
	in \Cref{def_ij_plus_minus_k},
	it follows that, for all $(i, j) ∈ [n]×[m] ∖ 𝐫$, $i+j=k$, then $d(i, j) > w⋅p$.
	This means
	\begin{align*}
		\abs[\bigg]{ c(k) - ∑_{\substack{(i,j) ∈ 𝐫 \\ i+j=k}} a(i) b(j) }
		&= \abs[\bigg]{ ∑_{\substack{(i,j) ∈ [n]×[m] ∖ 𝐫 \\ i+j=k}} a(i) b(j) } \\
		& ≤ ∑_{\substack{(i,j) ∈ [n]×[m] ∖ 𝐫 \\ i+j=k}} \abs{a(i) b(j)} \\
		& ≤ (m+1) 2^{-wp-˘h(k)}.
	\end{align*}
	Combining the two inequalities above, we get what we need to prove.
\end{proof}

\begin{lemma}
	\label{d_equal_sum_r_e}
	For integers $i⁻ ≤ i⁺$, $j⁻ ≤ j⁺$,
	\[ d(i⁻, j⁺) + d(i⁺, j⁻) - d(i⁻, j⁻) - d(i⁺, j⁺)
	= ∑_{k ∈ ℤ} \val(R, k) ⋅ e(k), \]
	where $R = [i⁻, i⁺] × [j⁻, j⁺]$.
\end{lemma}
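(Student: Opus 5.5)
The terms $˘f(i)$ and $˘g(j)$ cancel on the left-hand side, since each of $˘f(i⁻)$, $˘f(i⁺)$, $˘g(j⁻)$, $˘g(j⁺)$ appears once with sign $+$ and once with sign $-$. So the left-hand side equals
\[ ˘h(i⁻+j⁻) + ˘h(i⁺+j⁺) - ˘h(i⁻+j⁺) - ˘h(i⁺+j⁻). \]
The plan is to show this second difference of $˘h$ equals $∑_k \val(R,k)\, e(k)$. I will extend $e$ by $0$ outside $[1 ‥ n+m-1]$, so the sum over $k ∈ ℤ$ only involves points where $e$ is defined. I will also assume $i⁻+j⁻ ≥ 0$ and $i⁺+j⁺ ≤ n+m$, so all arguments of $˘h$ lie in its domain.

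Set $A = i⁻+j⁻$, $B = \min(i⁻+j⁺, i⁺+j⁻)$, $C = \max(i⁻+j⁺, i⁺+j⁻)$ and $D = i⁺+j⁺$, so that $A ≤ B ≤ C ≤ D$ and $A+D = B+C$. The expression becomes $˘h(A)+˘h(D)-˘h(B)-˘h(C)$.

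Next I compute $\val(R,k)$. It is the length of $\{x : i⁻ ≤ x ≤ i⁺,\ j⁻ ≤ k-x ≤ j⁺\}$, which is the piecewise-linear "trapezoid" function
\[ \val(R,k) = \max\bigl(0, \min(k-A,\ B-A,\ D-k)\bigr). \]
It is zero for $k ≤ A$ and $k ≥ D$, rises with slope $1$ on $[A,B]$, is constant equal to $B-A = D-C$ on $[B,C]$, and falls with slope $1$ on $[C,D]$. This is a short case check on which constraint is binding.

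The core step is summation by parts. Writing $e(k) = Δ(k) - Δ(k-1)$ with $Δ(k) = ˘h(k+1)-˘h(k)$, summation by parts gives
\[ ∑_k \val(R,k)\, e(k) = -∑_k \bigl(\val(R,k+1)-\val(R,k)\bigr)\, Δ(k). \]
Boundary terms vanish because $\val(R,\cdot)$ has finite support. The increment $\val(R,k+1)-\val(R,k)$ equals $+1$ for $k ∈ [A ‥ B-1]$, $0$ for $k ∈ [B ‥ C-1]$, $-1$ for $k ∈ [C ‥ D-1]$, and $0$ elsewhere; this uses that $A,B,C,D$ are integers. The sum then telescopes:
\[ -\bigl(˘h(B)-˘h(A)\bigr) + \bigl(˘h(D)-˘h(C)\bigr), \]
which is exactly the desired expression.

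The main obstacle is bookkeeping, not any real difficulty. The points to handle carefully are the degenerate cases ($i⁻=i⁺$, $j⁻=j⁺$, or $B=C$) and the behaviour of $\val$ at the integer breakpoints. An equivalent route is to write the left-hand side as $∑_{i=i⁻}^{i⁺-1}∑_{j=j⁻}^{j⁺-1}$ of unit-square mixed differences $˘h(i+j)+˘h(i+j+2)-2˘h(i+j+1)$. Each unit square's anti-diagonal slices integrate to weight $1$ on $e(i+j+1)$, since for a unit square $\val$ vanishes at the corner sums $i+j$ and $i+j+2$ and equals $1$ at $i+j+1$. This uses the additivity of $\val$ over the decomposition of $R$ into unit squares, which disagree only on measure-zero boundaries. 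I would present the summation-by-parts version and mention this decomposition as a sanity check.
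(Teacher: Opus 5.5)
Your proof is correct, and it takes a genuinely different route from the paper.

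The paper argues by induction on $(i^+-i^-)+(j^+-j^-)$. It first checks the degenerate cases $i^-=i^+$ or $j^-=j^+$, where both sides vanish. It then checks the unit square, where $\val(R,k)$ is the indicator of $k=i^-+j^-+1$ and the left-hand side is $e(i^-+j^-+1)$ by definition. Otherwise it cuts $R$ along an interior vertical or horizontal line into two rectangles. This works because the four-corner expression is additive under such a cut, and $\val(\cdot,k)$ is additive since the two pieces meet in a segment of measure zero.

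You instead cancel the $\breve f,\breve g$ terms at the outset and compute $\val(R,\cdot)$ in closed form as a trapezoid with breakpoints $A\le B\le C\le D$. You then finish with a single summation by parts. Every step checks out. This includes your use of the integrality of the breakpoints, and the fact that $\val(R,k)\neq 0$ only for integers strictly between $A$ and $D$, where $e(k)$ is genuinely defined.

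What your version buys is that it is non-inductive and isolates what the identity really says. The discrete second difference of the trapezoid kernel is $+1,-1,-1,+1$ at $A,B,C,D$, so only $\breve h$ at the four corner sums survives. Your trapezoid formula also generalizes the one the paper only writes down later, for squares, in the proof of \Cref{lemma_main}.

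What the paper's version buys is that it never needs an explicit formula for $\val$. It uses only additivity and the unit-square computation, so it would carry over unchanged to any region tiled by unit lattice squares. The unit-square decomposition you mention as a sanity check is essentially the paper's argument, done all at once rather than by repeated binary splitting.
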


\begin{proof}
	When $i⁻ = i⁺$,
	the left hand side is $(d(i⁻, j⁺) - d(i⁺, j⁺)) + (d(i⁺, j⁻) - d(i⁻, j⁻)) = 0$,
	and the right hand side is $0$ since every $\val(R, k)$ is $0$,
	so the equality holds.
	The case $j⁻ = j⁺$ is similar.

	Next, consider the case $i⁺-i⁻ = j⁺-j⁻ = 1$.
	In this case,
	\[ \val(R, k) = \begin{cases} 1 \text{ if } i⁻+j⁻+1 = k, \\ 0 \text{ otherwise. }\end{cases} \]
	So the right hand side is $e_{i⁻+j⁻+1}$.
	The result follows from the definitions.

	If none of the cases above holds, we induct on $i⁺-i⁻ + j⁺-j⁻$.
	The case $i⁺-i⁻ + j⁺-j⁻ ≤ 2$ are already solved.
	When $i⁺-i⁻ + j⁺-j⁻ ≥ 3$, either $i⁺-i⁻ ≥ 2$ or $j⁺-j⁻ ≥ 2$.
	We handle the case $i⁺-i⁻ ≥ 2$, the other case is similar.

	Let $i^* = i⁻ + 1$, although any $i⁻ < i^* < i⁺$ would suffice.
	Let $R_1 = [i⁻, i^*] × [j⁻, j⁺]$ and $R_2 = [i^*, i⁺] × [j⁻, j⁺]$.
	We easily see that $R_1 ∪ R_2 = R$ and $R_1 ∩ R_2$ is a vertical segment with zero area.
	Therefore, for all $k$,
	\[ ∑_{k ∈ ℤ} \val(R, k) ⋅ e(k) = ∑_{k ∈ ℤ} \val(R_1, k) ⋅ e(k) + ∑_{k ∈ ℤ} \val(R_2, k) ⋅ e(k). \]
	Furthermore, we have
	\begin{multline*}
		d(i⁻, j⁺) + d(i⁺, j⁻) - d(i⁻, j⁻) - d(i⁺, j⁺) =
		\Bigl(d(i⁻, j⁺) + d(i^*, j⁻) \\ - d(i⁻, j⁻) - d(i^*, j⁺)\Bigr) +
		\Bigl(d(i^*, j⁺) + d(i⁺, j⁻) - d(i^*, j⁻) - d(i⁺, j⁺)\Bigr).
	\end{multline*}
	By the induction hypothesis, the result follows.
\end{proof}

\begin{lemma}
	\label{lemma_main}
	For real numbers $ε > 0$, $i⁻₁$, $i⁺₁$, $j⁻₁$, $j⁺₁$,
	$i⁻₂$, $i⁺₂$, $j⁻₂$, $j⁺₂$,
	$ℓ₁$, $ℓ₂$,
	such that 
	\begin{gather}
		ℓ₁ = i⁺₁-i⁻₁ = j⁺₁-j⁻₁ ≥ 0,  \notag \\
		ℓ₂ = i⁺₂-i⁻₂ = j⁺₂-j⁻₂ ≥ 0,  \notag \\
		i⁻₁ ≤ i⁻₂, \qquad j⁻₁ ≤ j⁻₂, \notag \\
		i⁺₁ ≤ i⁺₂, \qquad j⁺₁ ≤ j⁺₂, \notag \\
	(i⁺₁+j⁺₁) - (i⁻₂+j⁻₂) ≥ ε ⋅ \min(ℓ₁, ℓ₂),  \label{overlap_amount}
	\end{gather}
	then, let
	\begin{align*}
		R₁ &= [i⁻₁, i⁺₁] × [j⁻₁, j⁺₁], \\
		R₂ &= [i⁻₂, i⁺₂] × [j⁻₂, j⁺₂], \\
		R₃ &= [i⁻₁, i⁺₂] × [j⁻₁, j⁺₂],
	\end{align*}
	we have, for every $k ∈ ℝ$,
	\[ \val(R₃, k) ≤ \frac{2}{ε} ⋅ \bigl(\val(R₁, k) + \val(R₂, k)\bigr). \]
\end{lemma}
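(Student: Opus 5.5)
Throughout, identify $\val(R,k)$ with the length of the diagonal segment $\{(i,k-i)\}\cap R$, measured in the $i$-coordinate. For the square $R_1$ with $s_1=i⁻₁+j⁻₁$ this is a \emph{tent function}:
\[ \val(R_1,k)=\max\bigl(0,\ \ell_1-\abs{k-s_1-\ell_1}\bigr), \]
supported on $[s_1,s_1+2\ell_1]$ and peaking at height $\ell_1$. The same holds for $R_2$ with $s_2=i⁻₂+j⁻₂$ and $\ell_2$. For the rectangle $R_3$, of width $W=i⁺₂-i⁻₁$ and height $H=j⁺₂-j⁻₁$, the function $\val(R_3,\cdot)$ is a trapezoid on $[s_1,s_2+2\ell_2]$ with slopes $\pm1$ and plateau height $\min(W,H)$. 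The ordering hypotheses give $s_1\le s_2$ and $s_1+2\ell_1\le s_2+2\ell_2$, so the support of $\val(R_3,\cdot)$ is exactly the union of the two tent supports. The hypothesis \eqref{overlap_amount} says the tent supports overlap in an interval of length $O:=s_1+2\ell_1-s_2\ge ε\min(\ell_1,\ell_2)$. The degenerate case $\min(\ell_1,\ell_2)=0$ needs a separate quick check, since then $ε$ carries no information.

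I will prove $\val(R_3,k)\le\frac2ε\bigl(\val(R_1,k)+\val(R_2,k)\bigr)$ by splitting the real line into three zones.

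\emph{Left zone} $k\in[s_1,s_1+\ell_1]$. Here $\val(R_1,k)=k-s_1$, and the left slope of the trapezoid gives $\val(R_3,k)\le k-s_1$. So the inequality holds even with constant $1$.

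\emph{Right zone} $k\in[s_2+\ell_2,s_2+2\ell_2]$. This is symmetric, using $R_2$. We will also use that the overlap forces $ε\le 2$ whenever $\min(\ell_1,\ell_2)>0$, since $O\le 2\min(\ell_1,\ell_2)$; hence $\frac2ε\ge1$, and constant $1$ suffices in both outer zones.

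\emph{Middle zone} $k\in(s_1+\ell_1,s_2+\ell_2)$, if nonempty. This is the main obstacle. Here $\val(R_1,\cdot)$ is decreasing with slope $-1$ and $\val(R_2,\cdot)$ is increasing with slope $+1$. Their sum is therefore bounded below by its value at the crossing point, which is at least $O/2\ge \frac ε2\min(\ell_1,\ell_2)$. This is not enough on its own, because $\val(R_3,k)$ can be as large as $\min(W,H)$, which may far exceed $\min(\ell_1,\ell_2)$.

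To fix this I will use the shape constraints. Say $\ell_1\ge\ell_2$; the other case is symmetric. The middle zone then has length $s_2+\ell_2-s_1-\ell_1=\ell_1+\ell_2-O\le \ell_1+\ell_2$. Throughout the zone, $\val(R_1,k)\ge \ell_1-(k-s_1-\ell_1)$, and $\val(R_2,k)$ grows at the same rate that $\val(R_1,k)$ loses. Hence $\val(R_1,k)+\val(R_2,k)$ stays at least $\min(\ell_1,\ \ell_1-\ell_2+O)$ up to the point where $R_1$'s tent ends, and at least $\ell_2$-ish thereafter.

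Meanwhile I bound $\val(R_3,k)\le\min(W,H)\le\frac{W+H}2=\frac{s_2+2\ell_2-s_1}{2}\le \ell_1+\ell_2-\frac O2$, together with the slope bounds $k-s_1$ and $s_2+2\ell_2-k$.

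The plan is then to write $F(k)=\frac2ε(\val(R_1,k)+\val(R_2,k))-\val(R_3,k)$. This is piecewise linear, so $F\ge0$ on the middle zone follows from checking its value at the finitely many breakpoints: the tent peaks, the tent endpoints $s_2$ and $s_1+2\ell_1$, and the trapezoid corners. At each breakpoint one compares $\ell_1+\ell_2-O/2$ against $\frac2ε$ times a quantity that is at least $\max(\ell_2,O/2,\ell_1-\ell_2+O/2)$. I expect this breakpoint bookkeeping, especially the point where $R_1$'s tent reaches zero while $\ell_1\gg\ell_2$, to be the delicate part of the argument.
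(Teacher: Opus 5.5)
Your middle-zone step, which is the only nontrivial part of the lemma, rests on false lower bounds. On the part of the overlap $[s_2,s_1+2\ell_1]$ inside the middle zone, $R_1$'s tent is on its falling flank and $R_2$'s is on its rising flank. So $\val(R_1,k)+\val(R_2,k)$ equals $O$ \emph{exactly}, and $O$ can be as small as $\varepsilon\min(\ell_1,\ell_2)$. Concretely, take $R_1=[0,10]^2$, $R_2=[9,11]^2$, $\varepsilon=1$ (so $O=2=\varepsilon\ell_2$) and $k=19$. Then $\val(R_1,19)+\val(R_2,19)=1+1=2$. This is not at least $\min(\ell_1,\ell_1-\ell_2+O)=10$, nor at least $\max(\ell_2,O/2,\ell_1-\ell_2+O/2)=9$. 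The right-hand side of the lemma is $4$ at this $k$, whereas your bound $\val(R_3,k)\le\min(W,H)\le\ell_1+\ell_2-O/2$ gives only $11$. So comparing the plateau bound with $\frac{2}{\varepsilon}$ times those quantities at breakpoints cannot close the argument when $\ell_1\gg\ell_2$. This is exactly the spot you flagged as delicate, and there the plateau bound is the wrong tool.

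The missing idea is to bound $\val(R_3,\cdot)$ by the unit-slope flank anchored at the far end of the \emph{smaller} square's tent. You list $s_2+2\ell_2-k$ among your bounds but never use it where it matters. With $\ell_1\ge\ell_2$, on the overlap it gives $\val(R_3,k)\le 2\ell_2=\frac{2}{\varepsilon}\cdot\varepsilon\ell_2\le\frac{2}{\varepsilon}O$, which is exactly what is needed (in the example, $\val(R_3,19)=3\le 4$). For $k<s_2$ in the middle zone, $\val(R_1,k)$ and $s_2+2\ell_2-k$ both increase at unit rate as $k$ decreases, so $\frac{2}{\varepsilon}\ge 1$ propagates the inequality. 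For $k>s_1+2\ell_1$, the flank bound is at most $2\ell_2-O$ while $\val(R_2,k)\ge O$. This is the mirror image of the paper's Cases 3--5: they take $\ell_1\le\ell_2$, use $\val(R_3,k)\le k-k_1^-\le 2\ell_1$ for $k\le k_1^+$, and then propagate by unit slope. Once repaired, your direct treatment of a non-square $R_3$ as a trapezoid is fine. It even makes the paper's diagonal-shift reduction to a square $R_3$ unnecessary, since only the unit-slope bounds on $\val(R_3,\cdot)$ are ever used.

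Finally, your degenerate-case ``quick check'' cannot succeed as stated. If $\ell_1=0$, the hypotheses force $R_1$ to be the lower-left corner of $R_2$, so $R_3=R_2$ and the claim reads $\val(R_2,k)\le\frac{2}{\varepsilon}\val(R_2,k)$. This fails for $\varepsilon>2$ whenever $\ell_2>0$, and symmetrically for $\ell_2=0$. You need $\varepsilon\le 2$ as an extra hypothesis. The paper's proof has the same silent gap, since it obtains $\varepsilon\le 2$ by dividing by $\ell_1$; this is harmless in its application, where $\varepsilon=8/\kappa<2$.
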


Note that $R_1$ and $R_2$ are squares.
See \Cref{fig_main_lemma} for an illustration.

	\begin{figure}
		\centering
		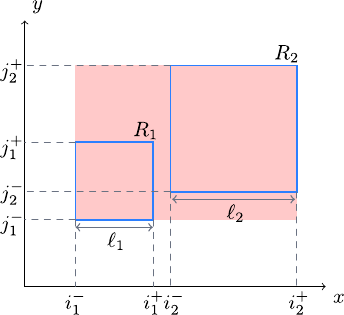
		\caption{
	Illustration for \Cref{lemma_main}.
	$R_3$ (shaded light red) is the smallest rectangle containing both $R_1$ and $R_2$.
}
		\label{fig_main_lemma}
	\end{figure}

While the lemma is intuitively obvious, trying to formalize the intuitive proof is extremely tedious.

\begin{proof}
	First, we handle the case where $R_3$ is a square.
	In this case, $i⁻₁ + j⁺₂ = i⁺₂ + j⁻₁$. 

	For a square $R = [i⁻, i⁺] ×[j⁻, j⁺]$ with $ℓ = i⁺-i⁻ = j⁺-j⁻$, we have $\val(R, k) = \max(0, ℓ-|k-i⁺-j⁻|)
	= \max(0, \min(k-i⁻-j⁻, i⁺+j⁺-k))$.

	We set, for both $r ∈ \{ 1, 2 \}$,
	\[
		k⁻ᵣ = i⁻ᵣ+j⁻ᵣ, \quad 
		k^*ᵣ = i⁻ᵣ+j⁺ᵣ = i⁺ᵣ+j⁻ᵣ, \quad 
		k⁺ᵣ = i⁺ᵣ+j⁺ᵣ.
	\]

	Without loss of generality, we assume $ℓ₁ ≤ ℓ₂$. The other case is handled symmetrically.

	We have $i⁻₁ ≤ i⁻₂$ and $j⁻₁ ≤ j⁻₂$. Combining with 
	\Cref{overlap_amount}, we get
	\[ 
		2 ℓ₁ = k⁺₁-k⁻₂ ≥ ε \min(ℓ₁, ℓ₂) = ε ℓ₁,
	\]
	so $ε ≤ 2$.

	When $k ∉ [k⁻₁, k⁺₂]$, $\val(R_r, k) = 0$ for all $r ∈ \{ 1, 2, 3 \}$,
	so it suffices to assume $k ∈ [k⁻₁, k⁺₂]$.

	Consider the following cases.

	\paragraph{Case 1} $k⁻₁ ≤ k ≤ k^*₁$. Then $\val(R₁, k) = \val(R₃, k) = k-k⁻₁$, combined with $ε ≤ 2$, the result holds.

	\paragraph{Case 2} $k^*₂ ≤ k ≤ k⁺₂$. Then $\val(R₂, k) = \val(R₃, k) = k⁺₂-k$, the result holds similar to above.

	\paragraph{Case 3} $k^*₁ ≤ k ≤ k⁻₂$. Then $\val(R₁, k) ≥ ε ℓ₁$, $\val(R₂, k) = 0$,
	and $\val(R₃, k) ≤ k-k⁻₁ ≤ 2 ℓ₁$.

	\paragraph{Case 4} $k⁻₂ ≤ k ≤ k⁺₁$. Then $\val(R₁, k) + \val(R₂, k) = ε ℓ₁$,
	and $\val(R₃, k) ≤ k-k⁻₁ ≤ 2 ℓ₁$.

	\paragraph{Case 5} $k⁺₁ ≤ k ≤ k^*₂$. Then $\val(R₁, k) = 0$, $\val(R₂, k) = k-k⁻₂$,
	and $\val(R₃, k) ≤ k-k⁻₁$.
	Equivalently,
	$\val(R₂, k) = \val(R₂, k⁺₁) + (k-k⁺₁)$, and
	$\val(R₃, k) ≤ \val(R₃, k⁺₁) + (k-k⁺₁)$.
	As in case 4, $\val(R₃, k⁺₁) ≤ \frac{2}{ε} ⋅ \val(R₂, k⁺₁)$,
	and since $ε ≤ 2$, $k-k⁺₁ ≤ \frac{2}{ε} (k-k⁺₁)$.

	To handle the non-square $R₃$ case, we reduce the problem to the square $R₃$ case.
	Define:
	\begin{align*}
		δ &= \frac{(i⁻₂-j⁻₂) - (i⁻₁-j⁻₁)}{2}, \\
		i⁻₄ &= i⁻₂ - δ, \\
		i⁺₄ &= i⁺₂ - δ, \\
		j⁺₄ &= j⁺₂ + δ, \\
		j⁻₄ &= j⁻₂ + δ, \\
		R₄ &= [i⁻₄, i⁺₄] × [j⁻₄, j⁺₄], \\
		R₅ &= [i⁻₁, i⁺₄] × [j⁻₁, j⁺₄].
	\end{align*}
	Intuitively, $R₅$ is the smallest rectangle containing both $R₁$ and $R₄$,
	and $R₄$ is obtained by moving $R₂$ diagonally such that $R₅$
	is a square. See \Cref{fig_main_lemma_move_to_square} for an illustration.
	
	\begin{figure}
		\centering
		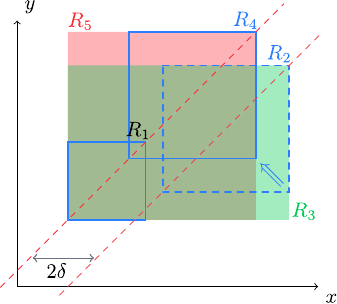
		\caption{
		Illustration for the reduction from the non-square $R₃$ to the square $R₃$ case
		in the proof of \Cref{lemma_main}.
		Note that $δ$ is a signed distance, it is positive in this example.
	}
		\label{fig_main_lemma_move_to_square}
	\end{figure}

	It can be checked that $R₅$ is a square.
	By the other case of the proof, we have, for every $k ∈ ℝ$,
	\[ \val(R₅, k) ≤ \frac{2}{ε} ⋅ \bigl(\val(R₁, k) + \val(R₄, k)\bigr). \]
	It can be checked that $\val(R₄, k) = \val(R₂, k)$.
	So, it suffices for us to prove, for every $k ∈ ℝ$,
	\[ \val(R₃, k) ≤ \val(R₅, k). \]

	If $δ = 0$, $R₃ = R₅$, the statement is obvious.

	We assume $δ > 0$. The other case, $δ<0$, is similar.
	See \Cref{fig_main_lemma_move_to_square_2} for an illustration.

	\begin{figure}
		\centering
		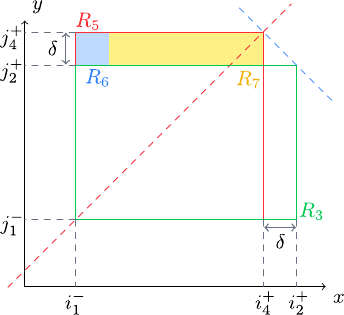
		\caption{
		Illustration for the proof of $\val(R₃, k) ≤ \val(R₅, k)$.
	}
		\label{fig_main_lemma_move_to_square_2}
	\end{figure}

	We have:
	\begin{align*}
		R₃ ∩ R₅ &= [i⁻₁, i⁺₄] × [j⁻₁, j⁺₂], \\
		R₃ ∖ R₅ &= [i⁺₄, i⁺₂] × [j⁻₁, j⁺₂], \\
		R₅ ∖ R₃ &= [i⁻₁, i⁺₄] × [j⁺₂, j⁺₄].
	\end{align*}
	So, it suffices to prove $\val(R₃ ∖ R₅, k) ≤ \val(R₅ ∖ R₃, k)$.
	Define:
	\begin{align*}
		R₆ &= [i⁻₁, i⁻₁ + δ] × [j⁺₂, j⁺₄], \\
		R₇ &= [i⁻₁ + δ, i⁺₄] × [j⁺₂, j⁺₄].
	\end{align*}
	Then it can be seen that $\val(R₃ ∖ R₅, k) = \val(R₇, k)$,
	and $\val(R₅ ∖ R₃, k) - \val(R₇, k) = \val(R₆, k) ≥ 0$.
\end{proof}

\begin{lemma}
	\label{lemma_almost_square}
	With the same assumptions as \Cref{lemma_main}. Then let $δ_i = i⁺₂-i⁻₁$, $δ_j = j⁺₂-j⁻₁$,
	for $λ = 3-ε$,
	then $δ_i ≤ λ ⋅ δ_j$ and $δ_j ≤ λ ⋅ δ_i$.
\end{lemma}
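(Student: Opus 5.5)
The plan is to write $δ_i$ and $δ_j$ as the side length of one of the two squares plus a nonnegative ``offset'', and to bound the sum of the offsets using \Cref{overlap_amount}. Set $x = i⁻₂-i⁻₁ ≥ 0$, $y = j⁻₂-j⁻₁ ≥ 0$, $x' = i⁺₂-i⁺₁ ≥ 0$ and $y' = j⁺₂-j⁺₁ ≥ 0$. Then
\[ δ_i = ℓ₂+x = ℓ₁+x', \qquad δ_j = ℓ₂+y = ℓ₁+y'. \]
The left side of \Cref{overlap_amount} can be computed in two ways:
\[ (i⁺₁+j⁺₁)-(i⁻₂+j⁻₂) = 2ℓ₁-(x+y) = 2ℓ₂-(x'+y'). \]
Write $μ = \min(ℓ₁,ℓ₂)$. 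The hypothesis then gives $x+y ≤ 2ℓ₁-εμ$ and $x'+y' ≤ 2ℓ₂-εμ$.

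We split into two cases according to which square is smaller. If $ℓ₁ ≤ ℓ₂$, then $μ = ℓ₁$, so $x+y ≤ (2-ε)ℓ₁ ≤ (2-ε)ℓ₂$. Here we use $ε ≤ 2$, which follows from the same inequality exactly as in the proof of \Cref{lemma_main} when $μ>0$. Since $y ≥ 0$, this gives $x ≤ (2-ε)ℓ₂$, and hence
\[ δ_i = ℓ₂+x ≤ (3-ε)ℓ₂ ≤ (3-ε)δ_j, \]
using $δ_j = ℓ₂+y ≥ ℓ₂$. Exchanging the roles of $x$ and $y$ gives $δ_j ≤ (3-ε)δ_i$.

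If $ℓ₂ ≤ ℓ₁$, we run the same argument with the primed quantities. Now $x'+y' ≤ (2-ε)ℓ₂ ≤ (2-ε)ℓ₁$, so $δ_i = ℓ₁+x' ≤ (3-ε)ℓ₁ ≤ (3-ε)δ_j$, since $δ_j = ℓ₁+y' ≥ ℓ₁$. The other inequality follows symmetrically.

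The only delicate point, and the one I expect to need care, is the degenerate case $μ = 0$, where $ε ≤ 2$ is not forced by \Cref{overlap_amount}. In that case the inequality gives $x+y ≤ 0$ (respectively $x'+y' ≤ 0$), so the two squares share a corner. Then $δ_i = δ_j$, and the claim holds whenever $λ = 3-ε ≥ 1$. I would therefore state explicitly that we assume $ε ≤ 2$; this is the only regime where the lemma is later used, and it is automatic when $μ>0$. With that assumption both cases close as above.
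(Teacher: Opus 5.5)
Your proof is correct and is essentially the paper's argument: the paper likewise combines \Cref{overlap_amount} with $j^-_2 \ge j^-_1$ to get $i^-_2 - i^-_1 \le (2-\varepsilon)\ell_1$, and hence $\delta_i \le \ell_2 + (2-\varepsilon)\ell_1 \le (3-\varepsilon)\ell_2 \le (3-\varepsilon)\delta_j$ for $\ell_1 \le \ell_2$; it handles the other orderings by a WLOG, where you instead use the primed offsets. Your caveat is warranted, because that middle inequality silently needs $\varepsilon \le 2$, and the statement as written genuinely fails when $\min(\ell_1,\ell_2) = 0 < \max(\ell_1,\ell_2)$ and $\varepsilon > 2$ (then $\delta_i = \delta_j > 0$ but $3-\varepsilon < 1$); this is harmless for the later application, which only needs $\lambda = 3$.
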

\begin{proof}
	Without loss of generality, assume $ℓ₁ ≤ ℓ₂$ and $δ_j ≤ δ_i$.
	We need to prove $δ_i ≤ λ δ_j$.

	By \Cref{overlap_amount}, we have $(i⁺₁+j⁺₁)-(i⁻₂+j⁻₂) ≥ ε ℓ₁$.
	Combining with $j⁻₂ ≥ j⁻₁$, we have
	$(i⁺₁+j⁺₁)-i⁻₂ ≥ ε ℓ₁ + j⁻₁$,
	or equivalently $i⁺₁-i⁻₂ ≥ (ε-1) ℓ₁$.
	Since $i⁺₁=i⁻₁+ℓ₁$ and $i⁻₂=i⁺₂-ℓ₂$, we have $i⁻₁-i⁺₂+ℓ₂ ≥ (ε-2) ℓ₁$,
	or equivalently, $δ_i ≤ ℓ₂+(2-ε) ℓ₁ ≤ (3-ε) ℓ₂ = λ ⋅ ℓ₂$.
	Combining with $j⁺₂ = j⁻₂ + ℓ₂ ≥ j⁻₁ + ℓ₂ ⟹ δ_j = j⁺₂-j⁻₁ ≥ ℓ₂$,
	we're done.
\end{proof}

\begin{lemma}
	\label{d_sum_bound}
	With the same notation and hypotheses as in \Cref{lemma_main}.
	Furthermore, assume for every $(i, j) ∈ (R_1 ∪ R_2) ∩ ℤ^2$, $d(i, j) ≤ δ$ for some constant $δ ≥ 0$.
	Then, $d(i⁻₁, j⁺₂) + d(i⁺₂, j⁻₁) ≤ 8δ/ε$.
\end{lemma}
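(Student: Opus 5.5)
Apply \Cref{d_equal_sum_r_e} to $R_3 = [i⁻₁, i⁺₂] × [j⁻₁, j⁺₂]$. Since $d ≥ 0$, this gives
\[ d(i⁻₁, j⁺₂) + d(i⁺₂, j⁻₁) ≤ d(i⁻₁, j⁺₂) + d(i⁺₂, j⁻₁) - d(i⁻₁,j⁻₁) - d(i⁺₂,j⁺₂) + d(i⁻₁,j⁻₁) + d(i⁺₂,j⁺₂), \]
and the first four terms on the right equal $∑_k \val(R₃, k) e(k)$. The corners $(i⁻₁, j⁻₁) ∈ R₁$ and $(i⁺₂, j⁺₂) ∈ R₂$ are integral points, so each of the last two terms is at most $δ$. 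Hence
\[ d(i⁻₁, j⁺₂) + d(i⁺₂, j⁻₁) ≤ ∑_k \val(R₃,k)e(k) + 2δ. \]
Here I assume the corners are integers, as they are in the application.

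Since $e ≥ 0$, \Cref{lemma_main} gives
\[ ∑_k \val(R₃,k) e(k) ≤ \frac{2}{ε}\Bigl(∑_k \val(R₁,k)e(k) + ∑_k \val(R₂,k)e(k)\Bigr). \]
Apply \Cref{d_equal_sum_r_e} again to each square $R_r$:
\[ ∑_k \val(R_r,k)e(k) = d(i⁻_r,j⁺_r) + d(i⁺_r,j⁻_r) - d(i⁻_r,j⁻_r) - d(i⁺_r,j⁺_r) ≤ 2δ, \]
because all four corners are integral points of $R_r$, so each $d$ lies in $[0, δ]$. This yields
\[ d(i⁻₁, j⁺₂) + d(i⁺₂, j⁻₁) ≤ \frac{2}{ε}(2δ+2δ) + 2δ = \frac{8δ}{ε} + 2δ. \]

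This has an extra $2δ$ compared with the claimed $8δ/ε$. That slack is the main obstacle, and I see two ways to remove it.

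The first is to use $ε ≤ 2$, which holds as in the proof of \Cref{lemma_main}. This gives $2δ ≤ 4δ/ε$, so the total is at most $12δ/ε$; that alone is not enough.

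The sharper route is not to discard the negative terms when applying \Cref{d_equal_sum_r_e} to $R₁$ and $R₂$. Keep the exact identity for $R₃$ and for each square:
\[ d(i⁻₁,j⁺₂) + d(i⁺₂,j⁻₁) = ∑_k \val(R₃,k)e(k) + d(i⁻₁,j⁻₁) + d(i⁺₂,j⁺₂), \]
and bound the sums over $R₁$ and $R₂$ by $d(i⁻_r,j⁺_r) + d(i⁺_r,j⁻_r) - d(i⁻_r,j⁻_r) - d(i⁺_r,j⁺_r)$ as they stand. Multiplied by $2/ε ≥ 1$, the terms $-\frac{2}{ε}d(i⁻₁,j⁻₁)$ and $-\frac{2}{ε}d(i⁺₂,j⁺₂)$ dominate the $+d(i⁻₁,j⁻₁) + d(i⁺₂,j⁺₂)$ contributions and cancel them. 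What remains is at most
\[ \frac{2}{ε}\bigl(d(i⁻₁,j⁺₁) + d(i⁺₁,j⁻₁) + d(i⁻₂,j⁺₂) + d(i⁺₂,j⁻₂)\bigr) ≤ \frac{8δ}{ε}. \]
The only point to verify carefully is this sign bookkeeping, specifically that $ε ≤ 2$ makes $2/ε ≥ 1$ so the cancellation goes through.
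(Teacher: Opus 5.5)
Your final ``sharper route'' is the paper's proof: you combine \Cref{lemma_main} with the identity of \Cref{d_equal_sum_r_e} on $R_3$, $R_1$, $R_2$, and use $\varepsilon \le 2$ (taken from the proof of \Cref{lemma_main}, as the paper also does) so that the $-\frac{2}{\varepsilon}$ copies of $d(i^{-}_{1},j^{-}_{1})$ and $d(i^{+}_{2},j^{+}_{2})$ absorb the ones moved across; bounding the four remaining corner terms by $\delta$ then gives $8\delta/\varepsilon$. Your earlier estimates giving $8\delta/\varepsilon+2\delta$ and $12\delta/\varepsilon$ are unnecessary but do no harm.
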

\begin{proof}
	By \Cref{lemma_main},
	for every $k$,
	\[ \val(R₃, k) ≤ \frac{2}{ε} ⋅ \bigl(\val(R₁, k) + \val(R₂, k)\bigr). \]
	Therefore,
	\[ ∑_{k ∈ ℤ} \val(R₃, k) e(k) ≤ ∑_{k ∈ ℤ} \frac{2}{ε} ⋅ \bigl(\val(R₁, k) + \val(R₂, k)\bigr) e(k). \]
	By \Cref{d_equal_sum_r_e}, we get
	\begin{multline*}
		d(i⁻₁, j⁺₂) + d(i⁺₂, j⁻₁) - d(i⁻₁, j⁻₁) - d(i⁺₂, j⁺₂)
		≤ \frac{2}{ε} ⋅ \Bigl(
			d(i⁻₁, j⁺₁) + d(i⁺₁, j⁻₁) \\ - d(i⁻₁, j⁻₁) - d(i⁺₁, j⁺₁) +
			d(i⁻₂, j⁺₂) + d(i⁺₂, j⁻₂) - d(i⁻₂, j⁻₂) - d(i⁺₂, j⁺₂)
		\Bigr).
	\end{multline*}
	As shown in the proof of \Cref{lemma_main}, $ε ≤ 2$.
	Therefore,
	\begin{multline*}
		d(i⁻₁, j⁺₂) + d(i⁺₂, j⁻₁) 
		≤ \frac{2}{ε} ⋅ \Bigl(
			d(i⁻₁, j⁺₁) + d(i⁺₁, j⁻₁) \\ - d(i⁺₁, j⁺₁) +
			d(i⁻₂, j⁺₂) + d(i⁺₂, j⁻₂) - d(i⁻₂, j⁻₂)
		\Bigr).
	\end{multline*}
	Since each $d(⋅)$ term on the right hand side is $≥ 0$ and $≤ δ$,
	the result follows.
\end{proof}

\begin{lemma}
	\label{max_d_on_corner}
	Let $R = [i⁻ ‥ i⁺] ×[j⁻ ‥ j⁺]$ be a rectangle.
	Then $\max d(R) ∈ \{ d(i⁻, j⁺), d(i⁺, j⁻) \}$.
\end{lemma}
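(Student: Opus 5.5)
We will show that $d$ restricted to the rectangle is maximized at one of the two anti-diagonal corners $(i⁻, j⁺)$ and $(i⁺, j⁻)$. The plan is to first reduce to the boundary of $R$, and then, along each side, compare the endpoint values using \Cref{d_equal_sum_r_e} together with $e ≥ 0$.

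\textbf{Step 1: the maximum is at a corner.} Fix $j$ and view $i \mapsto d(i,j) = ˘f(i) + ˘g(j) - ˘h(i+j)$ on $[i⁻ ‥ i⁺]$. This function is not obviously convex, because $-˘h$ is concave. So instead we use a mixed-difference monotonicity argument.

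\textbf{Step 2: a monotonicity identity.} Take any $(i,j) ∈ R$. Applying \Cref{d_equal_sum_r_e} to the sub-rectangle $[i ‥ i⁺] × [j⁻ ‥ j]$ gives
\[ d(i, j) + d(i⁺, j⁻) - d(i, j⁻) - d(i⁺, j) = ∑_k \val(\cdot, k)\, e(k) ≥ 0, \]
since each $e(k) ≥ 0$. The analogous identity holds for the sub-rectangle $[i⁻ ‥ i] × [j ‥ j⁺]$. These inequalities do not bound $d(i,j)$ from above directly, so the better route is to consider the interior point $(i,j)$ as the opposite corner of sub-rectangles. Concretely, apply \Cref{d_equal_sum_r_e} to $[i⁻ ‥ i] × [j ‥ j⁺]$, whose anti-diagonal corners are $(i⁻, j⁺)$ and $(i, j)$, and whose main-diagonal corners are $(i⁻, j)$ and $(i, j⁺)$. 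This yields
\[ d(i⁻, j⁺) + d(i, j) ≥ d(i⁻, j) + d(i, j⁺). \]
Again this is not an upper bound on $d(i,j)$. So the identity alone does not suffice, and we need the one-dimensional structure.

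\textbf{Step 3: the key observation.} Along the anti-diagonal $i + j = k$, the function $i \mapsto ˘f(i) + ˘g(k-i)$ is convex, since $˘f$ and $˘g$ are convex, and $˘h(k)$ is constant. Hence $d$ is convex along every anti-diagonal. Because $d ≥ 0$ and $\min d = 0$ on each anti-diagonal, the set $\{d ≤ c\}$ meets each anti-diagonal in an interval. By \cite[Lemma~18, Monotone Paths]{bringmann_2023_arxiv}, the endpoints of these intervals move monotonically with $k$.

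\textbf{Step 4: reduction to the corners $(i⁻,j⁺)$ and $(i⁺,j⁻)$.} Let $(i,j)$ be a maximizer of $d$ on $R$ and let $k = i + j$. The anti-diagonal through $(i,j)$ meets $R$ in a segment whose endpoints lie on the boundary of $R$. By convexity along that anti-diagonal, the maximum is attained at one of these endpoints. Such an endpoint lies either on the top or left side of $R$ (that is, $j = j⁺$ or $i = i⁻$), or on the bottom or right side ($j = j⁻$ or $i = i⁺$). Suppose it is on the left side, say $(i⁻, j')$. Then apply the inequality from \Cref{d_equal_sum_r_e} to the rectangle $[i⁻ ‥ i⁺] × [j' ‥ j⁺]$ in a form that moves the point toward $(i⁻, j⁺)$. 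Equivalently, use the Monotone Paths property: the point of $\{d = 0\}$ on the anti-diagonal through $(i⁻, j')$ lies to the lower-right of the one through $(i⁻, j⁺)$. Convexity along that anti-diagonal would then show $d(i⁻, j') ≤ d(i⁻, j⁺)$ whenever $(i⁻, j')$ lies on the upper-left of the zero point. The other sides are handled symmetrically.

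\textbf{Main obstacle.} The hardest part is the boundary case: making rigorous that $j' \mapsto d(i⁻, j')$ is maximized at $j' = j⁺$ and not at $j⁻$. If both corners $(i⁻, j⁻)$ and $(i⁻, j⁺)$ were candidates, we must rule out $d(i⁻, j⁻)$ being the largest. For this we would use \Cref{d_equal_sum_r_e} on all of $R$:
\[ d(i⁻, j⁻) + d(i⁺, j⁺) ≤ d(i⁻, j⁺) + d(i⁺, j⁻). \]
This inequality bounds the sum of the main-diagonal corner values by the sum of the anti-diagonal corner values. That is not enough on its own, so we would combine it with convexity along the anti-diagonals through $(i⁻, j⁻)$ and $(i⁺, j⁺)$, which are degenerate single points within $R$. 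This last combination is exactly where we expect the real difficulty to lie.
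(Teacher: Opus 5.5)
Your proposal has a genuine gap, and it sits exactly where the content of the lemma lies. Step~4 correctly reduces to the boundary of $R$: convexity along the anti-diagonal through a maximizer puts a maximum on a side of $R$. What follows is not a proof. The claim that convexity along the anti-diagonal through $(i^-,j')$ yields $d(i^-,j')\le d(i^-,j^+)$ cannot work: for $j'\ne j^+$ these two points lie on different anti-diagonals, so convexity along one of them says nothing about comparing them. Your fallback in the ``main obstacle'' paragraph does not close the gap either. \Cref{d_equal_sum_r_e} applied to $R$ only gives $d(i^-,j^-)+d(i^+,j^+)\le d(i^-,j^+)+d(i^+,j^-)$. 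This is compatible with $d(i^-,j^-)$ being the strict maximum, e.g.\ main-diagonal corner values $5,0$ against anti-diagonal values $3,3$. The anti-diagonals through $(i^-,j^-)$ and $(i^+,j^+)$ meet $R$ in single points, so convexity along them adds nothing. So the proposal never shows that a side point, or a main-diagonal corner, is dominated by $\max\bigl(d(i^-,j^+),d(i^+,j^-)\bigr)$.

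Your Step~3 already contains the right ingredients, which are exactly what the paper's one-line proof (a citation of Monotone Paths) relies on. The missing idea is to apply them directly at the right threshold instead of passing to a maximizer. Let $c=\max\bigl(d(i^-,j^+),d(i^+,j^-)\bigr)$, and let $[\ell_k,r_k]$ be the interval of indices $i$ with $d(i,k-i)\le c$. You need Monotone Paths in its two-coordinate form: $\ell_k$, $k-\ell_k$, $r_k$ and $k-r_k$ are all non-decreasing in $k$. This is also how the paper uses the citation in the proof of \Cref{main_algorithm_accuracy}. Now take $(i,j)\in R$, set $k=i+j$ and $k_1=i^-+j^+$:
\begin{itemize}
\item if $k\le k_1$, then $\ell_k\le \ell_{k_1}\le i^-\le i$;
\item if $k\ge k_1$, then $k-\ell_k\ge k_1-\ell_{k_1}\ge j^+\ge j$, i.e.\ $\ell_k\le i$.
\end{itemize}
Symmetrically, using $(i^+,j^-)$ and $k_2=i^++j^-$, you get $r_k\ge i$. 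Hence $d(i,j)\le c$, with no boundary or corner case analysis at all. If you prefer not to lean on the citation, this monotonicity follows from the fact that the slope sequence of $\breve h$ is the sorted merge of the slope sequences of $\breve f$ and $\breve g$. Counting slopes then gives $\min\bigl(d(i+1,j),d(i,j+1)\bigr)\le d(i,j)$ and $\min\bigl(d(i-1,j),d(i,j-1)\bigr)\le d(i,j)$, taken over whichever neighbours exist. These two facts say exactly that $\ell_{k+1}-\ell_k$ and $r_{k+1}-r_k$ lie in $\{0,1\}$.
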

\begin{proof}
	Follows immediately from \cite[Lemma~18, Monotone Paths]{bringmann_2023_arxiv}.
\end{proof}

\begin{lemma}
	\label{lemma_max_d_le}
	With the same notation and hypotheses as in \Cref{d_sum_bound}.
	Then $\max_{(i, j) ∈ R₃} d(i, j) ≤ 8δ/ε$.
\end{lemma}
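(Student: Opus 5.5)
By \Cref{max_d_on_corner}, applied to the rectangle $R₃ ∩ ℤ^2 = [i⁻₁ ‥ i⁺₂] × [j⁻₁ ‥ j⁺₂]$, the maximum of $d$ over $R₃$ is attained at one of the two anti-diagonal corners. That is,
\[ \max_{(i,j) ∈ R₃} d(i,j) ∈ \bigl\{ d(i⁻₁, j⁺₂),\ d(i⁺₂, j⁻₁) \bigr\}. \]
(If the endpoints are not integers, we first note that the relevant lattice rectangle is the one spanned by these corner coordinates. In the intended application they are integers, since they come from the sequences $i^\pm_k$, $j^\pm_k$.) So the whole task is to bound each of these two corner values separately.

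\Cref{d_sum_bound} gives $d(i⁻₁, j⁺₂) + d(i⁺₂, j⁻₁) ≤ 8δ/ε$. Because $d(i,j) ≥ 0$ everywhere, each summand is bounded by the sum. Hence both $d(i⁻₁, j⁺₂) ≤ 8δ/ε$ and $d(i⁺₂, j⁻₁) ≤ 8δ/ε$. Combining this with the corner statement yields $\max_{(i,j) ∈ R₃} d(i,j) ≤ 8δ/ε$, as claimed.

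The only points needing care are the following.
\begin{enumerate}
\item Nonnegativity of $d$. This follows from the definition of $˘h$ as the min-plus convolution of $˘f$ and $˘g$, which gives $d(i,j) ≥ 0$.
\item Domain membership. The corners $(i⁻₁, j⁺₂)$ and $(i⁺₂, j⁻₁)$ must lie in $[n] × [m]$, so that \Cref{max_d_on_corner} and \Cref{d_sum_bound} apply. This holds because they are formed from coordinates of $R₁$ and $R₂$, whose integer points are assumed to lie in the domain of $d$.
\end{enumerate}
I expect no real obstacle. The substantive work has already been done in \Cref{d_sum_bound} (via \Cref{lemma_main} and \Cref{d_equal_sum_r_e}) and in the monotone-paths fact behind \Cref{max_d_on_corner}.
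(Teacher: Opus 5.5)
Your proposal is correct and matches the paper's proof. Nonnegativity of $d$ turns the sum bound of \Cref{d_sum_bound} into a bound of $8\delta/\varepsilon$ on each anti-diagonal corner value, and \Cref{max_d_on_corner} then extends that bound to all of $R_3$; your remarks on integrality and domain membership of the corners are extra care that the paper leaves implicit.
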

\begin{proof}
	Since $d(i⁻₁, j⁺₂) ≥ 0$ and $d(i⁺₂, j⁻₁) ≥ 0$,
	from the conclusion of \Cref{d_sum_bound}, we know
	$\max(d(i⁻₁, j⁺₂), d(i⁺₂, j⁻₁)) ≤ 8δ/ε$.
	Applying \Cref{max_d_on_corner}, the result follows.
\end{proof}

\begin{proposition}
	\label{sum_side_rectangles_le}
	For the sequence $‾k$ constructed in \Cref{sec_main_algorithm},
	we have
	\[
		∑_{q ∈ [t-1]} \Bigl(i⁺_{‾k(q+1)} + j⁺_{‾k(q+1)} - i⁻_{‾k(q)} - j⁻_{‾k(q)}\Bigr)
		< \frac{3κ-8}{κ-4} (n+m).
	\]
\end{proposition}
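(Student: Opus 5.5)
The plan is to rewrite each summand in terms of the side lengths of the squares $𝐫_k$, and then use the maximality in the greedy construction of $‾k$, via \Cref{lemma_max_d_le}, to bound those side lengths by the gaps $‾k(q+1)-‾k(q)$.

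\textbf{Rewriting the summand.} For each $k$, write $ℓ_k = i⁺_k-i⁻_k = j⁺_k-j⁻_k$. Since $i⁻_k+j⁺_k = i⁺_k+j⁻_k = k$, we get
\[ i⁺_k+j⁺_k = k+ℓ_k \quad\text{and}\quad i⁻_k+j⁻_k = k-ℓ_k. \]
So the $q$-th summand equals $\bigl(‾k(q+1)-‾k(q)\bigr) + ℓ_{‾k(q+1)} + ℓ_{‾k(q)}$. The first parts telescope to roughly $n+m$. It therefore remains to bound $\sum_q \bigl(ℓ_{‾k(q)}+ℓ_{‾k(q+1)}\bigr)$ by a constant multiple of $\sum_q \bigl(‾k(q+1)-‾k(q)\bigr)$.

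\textbf{Using maximality.} Fix $q$ with $‾k(q+1) \le n+m$. By the maximality of $‾k(q+1)$, choosing $‾k(q+1)+1$ in its place would violate \Cref{max_d_corner_le}. Hence one of $d(i⁻_{‾k(q)}, j⁺_{‾k(q+1)})$ and $d(i⁺_{‾k(q+1)}, j⁻_{‾k(q)})$ exceeds $κwp$.

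Now apply \Cref{lemma_max_d_le} with $R₁ = \Int 𝐫_{‾k(q)}$, $R₂ = \Int 𝐫_{‾k(q+1)}$, $δ = wp$ and $ε = 8/κ$. Here $ε<2$ because $κ>4$ by \Cref{kappa_ge_4}. The hypotheses check out as follows:
\begin{itemize}
\item $d \le wp$ on both squares by \Cref{def_ij_plus_minus_k}; a point strictly inside $𝐫_k$ has $d \le wp$ by the monotone path property, \Cref{max_d_on_corner}.
\item The monotonicity conditions $i⁻₁\le i⁻₂$, $j⁻₁ \le j⁻₂$, $i⁺₁\le i⁺₂$, $j⁺₁\le j⁺₂$ also follow from the monotone path property \cite[Lemma~18]{bringmann_2023_arxiv}. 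This is the step I expect to need the most care: one must check that $k \mapsto i^\pm_k$ and $k \mapsto j^\pm_k$ are nondecreasing.
\end{itemize}
If the overlap condition \Cref{overlap_amount} held, the lemma would give every $d$ in $R₃$ at most $8wp/ε = κwp$, a contradiction. So the overlap condition fails:
\[ \bigl(‾k(q)+ℓ_{‾k(q)}\bigr) - \bigl(‾k(q+1)-ℓ_{‾k(q+1)}\bigr) < \tfrac{8}{κ}\min\bigl(ℓ_{‾k(q)},ℓ_{‾k(q+1)}\bigr) \le \tfrac{4}{κ}\bigl(ℓ_{‾k(q)}+ℓ_{‾k(q+1)}\bigr). \]
Rearranging gives
\[ ℓ_{‾k(q)}+ℓ_{‾k(q+1)} < \tfrac{κ}{κ-4}\bigl(‾k(q+1)-‾k(q)\bigr). \]

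\textbf{Summing.} Over all such $q$, the summands are bounded by $\bigl(1+\frac{κ}{κ-4}\bigr)\bigl(‾k(q+1)-‾k(q)\bigr) = \frac{2κ-4}{κ-4}\bigl(‾k(q+1)-‾k(q)\bigr)$, and these total at most $\frac{2κ-4}{κ-4}(n+m)$.

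The final index $q=t-1$, where $‾k(t)=n+m+1$, cannot use maximality. There I would interpret the corner quantities at $n+m+1$ via the boundary convention, i.e.\ $i⁺=n$, $j⁺=m$. I would bound this term crudely by $i⁺+j⁺-i⁻-j⁻ \le (n+m)-\bigl(‾k(t-1)-ℓ_{‾k(t-1)}\bigr)$, which is at most about $(n+m)$ plus the gap already counted.

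The slack between $\frac{2κ-4}{κ-4}$ and the claimed $\frac{3κ-8}{κ-4} = \frac{2κ-4}{κ-4}+1$ is exactly one extra $(n+m)$. This should absorb both the boundary term and the off-by-one in the telescoping sum, giving the strict inequality. Verifying that this last-term accounting fits within that single extra $n+m$ is the remaining technical point.
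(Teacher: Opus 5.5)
Your proposal is correct and is essentially the paper's proof. In both, maximality of $\overline{k}(q+1)$ together with \Cref{lemma_max_d_le} (with $\varepsilon = 8/\kappa$, $\delta = wp$, and the monotonicity hypotheses likewise taken for granted in the paper) forces \Cref{overlap_amount} to fail. This gives the per-step bound $\frac{2\kappa-4}{\kappa-4}\bigl(\overline{k}(q+1)-\overline{k}(q)\bigr)$, which the paper phrases as $k^*_2-k^*_1 > \frac{2-\varepsilon}{4-\varepsilon}(k^+_2-k^-_1)$, and it is then telescoped over $q\in[t-2]$.

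The point you leave open is immediate. The telescoped sum is $\overline{k}(t-1)\le n+m$ with no off-by-one, and the last term $q=t-1$ is at most $n+m$ simply because all $i^{\pm}\in[n]$ and $j^{\pm}\in[m]$. That accounts exactly for the extra $1$ in $\frac{3\kappa-8}{\kappa-4}=\frac{2\kappa-4}{\kappa-4}+1$.
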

Recall $i⁺_k$ and $i⁻_k$ was defined in \Cref{def_ij_plus_minus_k}.
Recall we assumed $κ>4$ in \Cref{kappa_ge_4}.
We believe that any $κ>1$ would work (possibly with a larger multiplicative constant on the right hand side), but the proof would need to be more involved.
\begin{proof}
	Fix a particular $q ∈ [t-2]$.
	Since $q+1 ≠ t$, we know $‾k(q+1) ≤ n+m$.
	By the construction of $‾k$,
	$‾k(q+1)$ is the largest integer
	such that $‾k(q+1) ≤ n+m+1$ and
	\[ \max\Bigl(d\bigl(i⁻_{‾k(q)},j⁺_{‾k(q+1)-1}\bigr),d\bigl(i⁺_{‾k(q+1)-1},j⁻_{‾k(q)}\bigr)\Bigr) ≤ κ ⋅ w ⋅ p. \]
	Combined with $‾k(q+1) ≤ n+m$, we conclude that
	\[ \max\Bigl(d\bigl(i⁻_{‾k(q)},j⁺_{‾k(q+1)}\bigr),d\bigl(i⁺_{‾k(q+1)},j⁻_{‾k(q)}\bigr)\Bigr) > κ ⋅ w ⋅ p. \]
	Then, apply \Cref{lemma_max_d_le} with
	\begin{align*}
		ε &= \frac{8}{κ}, \\
		δ &= w ⋅ p, \\
		R_1 &= [i⁻_{‾k(q)}, i⁺_{‾k(q)}] × [j⁻_{‾k(q)}, j⁺_{‾k(q)}], \\
		R_2 &= [i⁻_{‾k(q+1)}, i⁺_{‾k(q+1)}] × [j⁻_{‾k(q+1)}, j⁺_{‾k(q+1)}], \\
		R_3 &= [i⁻_{‾k(q+1)}, i⁺_{‾k(q+1)}] × [j⁻_{‾k(q+1)}, j⁺_{‾k(q+1)}],
	\end{align*}
	we see that the conclusion \emph{does not} hold, so at least one of the hypotheses fail.
	In this case, the only hypothesis that can fail is \Cref{overlap_amount}.
	So, we conclude
	\[
		\bigl(i⁺_{‾k(q)} + j⁺_{‾k(q)}\bigr) - \bigl(i⁻_{‾k(q+1)} + j⁻_{‾k(q+1)}\bigr)
		< ε ⋅ \min\bigl( i⁺_{‾k(q)} - i⁻_{‾k(q)}, i⁺_{‾k(q+1)} - i⁻_{‾k(q+1)} \bigr).
	\]
	For brevity, similar to the proof of \Cref{lemma_main}, we define
	\begin{align*}
		k⁻₁  &= i⁻_{‾k(q)} + j⁻_{‾k(q)}, & k⁻₂  &= i⁻_{‾k(q+1)} + j⁻_{‾k(q+1)}, \\
		k^*₁ &= i⁻_{‾k(q)} + j⁺_{‾k(q)}, & k^*₂ &= i⁻_{‾k(q+1)} + j⁺_{‾k(q+1)}, \\
		k⁺₁  &= i⁺_{‾k(q)} + j⁺_{‾k(q)}, & k⁺₂  &= i⁺_{‾k(q+1)} + j⁺_{‾k(q+1)}, \\
		ℓ₁   &= i⁺_{‾k(q)} - i⁻_{‾k(q)}, & ℓ₂   &= i⁺_{‾k(q+1)} - i⁻_{‾k(q+1)}.
	\end{align*}
	So, we have
	\begin{align*}
		k⁺₁ - k⁻₂ &< ε \min(ℓ₁, ℓ₂) \\
		⟺ k^*₁ + ℓ₁ - k^*₂ + ℓ₂ &< ε \min(ℓ₁, ℓ₂) \\
		⟺ k^*₂ - k^*₁ &> ℓ₁ + ℓ₂ - ε \min(ℓ₁, ℓ₂) \\
		⟹ k^*₂ - k^*₁ &> (1 - \tfrac{ε}{2}) (ℓ₁ + ℓ₂).
	\end{align*}
	With some rearrangement, combined with $κ > 4 ⟹ ε < 2$, we get
	\[
		k^*₂ - k^*₁ > \frac{2-ε}{4-ε} ⋅ (k⁺₂ - k⁻₁).
	\]
	Substituting in the definitions, we have, for all $q ∈ [t-2]$,
	\[
		\bigl(i⁺_{‾k(q+1)} + j⁻_{‾k(q+1)}\bigr) - \bigl(i⁺_{‾k(q)} + j⁻_{‾k(q)}\bigr)
		> \frac{2-ε}{4-ε} ⋅ \bigl(i⁺_{‾k(q+1)} + j⁺_{‾k(q+1)} - i⁻_{‾k(q)} - j⁻_{‾k(q)}\bigr).
	\]
	Summing over all $q ∈ [t-2]$ and telescope the sum on the left hand side, we get
	\[
		i⁺_{‾k(t-1)} + j⁻_{‾k(t-1)}
		> \frac{2-ε}{4-ε} ⋅ ∑_{q ∈ [t-2]} \bigl(i⁺_{‾k(q+1)} + j⁺_{‾k(q+1)} - i⁻_{‾k(q)} - j⁻_{‾k(q)}\bigr).
	\]
	Note that $‾k(0) = 0$, so $i⁺_{‾k(0)} = j⁻_{‾k(0)} = 0$.
	Combining with all $i⁻_k$ and $i⁺_k$ are in $[n]$, and all $j⁻_k$ and $j⁺_k$ are in $[m]$, we get the result.
\end{proof}

\section{Time complexity of the main algorithm}

\begin{lemma}
	Consider the algorithm $\RescaleRound{⋅}$ in \Cref{algorithm:linear_scaling}.
	Suppose the input satisfies $\abs{a(i)/2^{σ ⋅ i+α}} ∈ 2^{O(wp)}$.
	Then the algorithm can be implemented in time $O\bigl((i⁺-i⁻+1) ⋅ p \log p + (\log^2 w + p \log p) \log(wp)\bigr)$.
\end{lemma}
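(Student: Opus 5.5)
The idea is to avoid computing $2^{σ⋅i+α}$ afresh for each $i$. We precompute $2^{σ}$ and the fractional part of $2^{σ i⁻+α}$ once. After that, every scaling factor is an exact power of two times a value obtained by one multiplication of $p$-word numbers. Concretely, write $σ = u/v$ with $u, v$ being $O(1)$-word integers. For each $i$, write $σ i + α = ⌊σ i+α⌋ + r_i/v$ with $r_i ∈ [0‥v-1]$. Dividing by $2^{⌊σ i + α⌋}$ is then an exact exponent shift in $O(1)$ time, since the exponents are $O(1)$-word integers. The remaining factor $2^{-r_i/v}$ lies in $(1/2, 1]$.

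The step needing care is computing the factors $2^{-r_i/v}$. The residues $r_i$ may take up to $v$ distinct values, and $v$ is only bounded by $2^{O(w)}$. So we cannot afford one exponential per residue. Instead we compute $\rho = 2^{-σ}$ (equivalently $2^{-1/v}$ combined with the integer part) to $p+O(1)$ words of precision once. By \Cref{sec_prior_works}, this costs $O\bigl((\log^2 w + p\log p)\log(wp)\bigr)$. We also compute $2^{-(σ i⁻+α)}$ to the same precision, at the same cost. We then generate the scale factors $s_i ≈ 2^{-(σ i + α)}$ incrementally via $s_{i+1} = s_i ⋅ \rho$. After each step we renormalize the exponent, so every $s_i$ is a $(p+O(1))$-word floating-point number. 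Each step is a single floating-point multiplication costing $O(p\log p)$.

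The main obstacle is error accumulation along this recurrence, since relative errors compound over $i⁺-i⁻ ≤ n ≤ 2^w$ steps. The plan is to carry $p + c$ words for a constant $c$, chosen so that $c⋅w$ exceeds $\log_2 n + O(1)$. This works because $n ≤ 2^w$, so one extra word suffices. The accumulated relative error of $s_i$ is then at most $(n+1)⋅2^{2-w(p+c)} ≤ 2^{-wp-O(1)}$. Alternatively, one can recompute $s_i$ directly every $2^w$ steps, or use a binary-tree power scheme. Both have the same complexity, but the extra-word argument is simplest.

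Finally, $\overline a(i)$ is obtained by multiplying $a(i)$ by $s_i$, which costs $O(p\log p)$, and rounding to an integer. By hypothesis $|a(i)/2^{σ i + α}| ∈ 2^{O(wp)}$, so the product has $O(p)$ words before the binary point. The working precision therefore only needs to be $O(p)$ words, and the product's absolute error is at most $|a(i)/2^{σ i + α}| ⋅ 2^{-wp-O(1)}$. We then round to the nearest integer by truncating the fractional bits, which is $O(p)$ time via shifts. The total error is within $\min\bigl(1, |a(i)|/2^{σ i+α+1}\bigr)$ as required by \Cref{error_requirement_rescale_round}. This holds provided the working precision exceeds the magnitude bound by a constant number of words; that fixes how large the $O(p)$ precision must be, and is where the hypothesis is used. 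Summing the per-element cost $O(p\log p)$ over $i⁺-i⁻+1$ elements and adding the one-time transcendental cost gives the stated bound.
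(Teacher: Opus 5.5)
Your proposal is essentially the paper's proof: compute $2^{-\sigma}$ and $2^{-\sigma i^- - \alpha}$ once to $O(p)$ words at the one-time transcendental cost, get the remaining scale factors by repeated multiplication at $O(p\log p)$ each, absorb the accumulated error with $O(1)$ extra words, and then multiply and round each coefficient naively; your use of $n \le 2^w$ is more careful than the paper's ``typically $i^+ + i^- + 1 \ll 2^w$'', though it needs two extra words rather than one. One caveat, inherited from the paper rather than introduced by you: if $0 < |a(i)/2^{\sigma i+\alpha}| < 2/3$, no integer lies within $|a(i)|/2^{\sigma i+\alpha+1}$ of $a(i)/2^{\sigma i+\alpha}$, so your final claim that the rounding meets the $\min$ in \Cref{error_requirement_rescale_round} is false there; however, the argument of \Cref{prop:linear_scaling} goes through using only $|\overline{a}(i) - a(i)/2^{\sigma i+\alpha}| \le 1$ (with \Cref{a_le_2_pow_phi} this still gives $|\overline{a}(i)| \le 2\cdot 2^{-\varphi_a(i)-\sigma i-\alpha}$), which your rounding does provide.
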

Note that the naive algorithm takes $O\bigl((i⁺-i⁻+1) ⋅ (\log^2 w + p \log p) \log(wp)\bigr)$.
\begin{proof}
	First, compute a $O(p)$-word approximation to $2^{-σ}$ and $2^{-α-σ ⋅ i⁻}$.
	Then, compute values of $2^{-σ ⋅ i-α}$ for all $i ∈[i⁻ ‥ i⁺]$ by starting with $2^{-α}$ and repeatedly multiply $2^{-σ}$.
	The rest can be implemented with a naive algorithm.

	Note that the precision of the intermediate values only need to be sufficient for
	\Cref{error_requirement_rescale_round} to hold. As explained in \Cref{remark_loose_round_requirement},
	this is not a very strict requirement.
	We see that 
	the relative accumulated error is bounded by $2^{-O(wp)} ⋅ (i⁺+i⁻+1) ⋅ O(1)$.
	Typically, $i⁺+i⁻+1 ≪ 2^w$, so the relative accumulated error is bounded by $2^{-O(wp)}$.
	By choosing the constant factor in the exponent appropriately,
	this can satisfy \Cref{error_requirement_rescale_round}.
\end{proof}
In practice, one can use a binary splitting algorithm, so that the accumulated error
is only $2^{-w(p+1)} ⋅ O(i⁺+i⁻+1) ⋅ O(1)$, while requiring the same number of multiplications.

The following variant may be of interest in real implementations.

\begin{lemma}
	\label{rescale_round_time_complexity_variant}
	Consider the algorithm $\RescaleRound{⋅}$ in \Cref{algorithm:linear_scaling}.
	Suppose the input satisfies $\abs{a(i)/2^{σ ⋅ i+α}} ∈ 2^{O(wp)}$,
	$σ$ has denominator $2^k$ for a positive integer $k ∈ O(i⁺-i⁻+1)$,
	and any $Ω(w)$-bit square root approximation can be computed in $O(1)$ time.
	Then the algorithm can be implemented in time $O\bigl((i⁺-i⁻+1) ⋅ p \log p\bigr)$.
\end{lemma}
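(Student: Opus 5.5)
The idea is to remove the $(\log^2 w + p\log p)\log(wp)$ term of the previous lemma. That term came from evaluating $2^{-σ}$ and $2^{-α-σi⁻}$ with the general logarithm/exponential machinery. When $σ$ has denominator $2^k$, we can instead reach $2^{-σ}$ by repeated square roots. Write $σ = s/2^k$ with $s$ an $O(1)$-word integer. Then $2^{-σ} = (2^{-s})^{1/2^k}$, where $2^{-s}$ is an exact power of two and costs nothing. The plan has three steps: compute $2^{-σ}$ to $O(p)$ words by $k$ successive square roots, obtain the starting value $2^{-α-σi⁻}$ cheaply, and then run the same multiply-by-$2^{-σ}$ recurrence and rounding as in the previous lemma.

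\textbf{Step 1: computing $2^{-σ}$.} Set $x_0 = 2^{-s}$ and $x_{r+1} = \sqrt{x_r}$, so $x_k = 2^{-σ}$. By the discussion of square roots in \Cref{sec_prior_works}, once an $Ω(w)$-bit approximation of the result is known, a $p$-word square root costs $O(p\log p)$ by Newton's iteration. The new hypothesis says such a $Θ(w)$-bit initial approximation is available in $O(1)$ time. So each of the $k$ square roots costs $O(p\log p)$, for a total of $O(kp\log p) = O\bigl((i⁺-i⁻+1)p\log p\bigr)$, using $k ∈ O(i⁺-i⁻+1)$. For precision, taking a square root halves the relative error rather than amplifying it. Working with $p+O(1)$ words therefore keeps the accumulated relative error at $O(k)\cdot 2^{-w(p+O(1))}$, which we can make $2^{-Ω(wp)}$ since $k ≤ 2^w$.

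\textbf{Step 2: the starting factor.} Write $α + σi⁻ = α + s i⁻/2^k$. Split $s i⁻ = u\cdot 2^k + v$ with $0 ≤ v < 2^k$. Then $2^{-α-σi⁻} = 2^{-α-u}\cdot 2^{-v/2^k}$. The factor $2^{-α-u}$ is an exact shift. The factor $2^{-v/2^k}$ can be computed by binary exponentiation in $x_k$, at a cost of $O(k)$ multiplications, i.e.\ $O(kp\log p)$. A better option is to reuse the powers $x_k^{2^r} = x_{k-r}$ that Step 1 already produced: multiplying together those $x_{k-r}$ for which bit $r$ of $v$ is set takes at most $k$ multiplications, each costing $O(p\log p)$.

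\textbf{Step 3: the recurrence and rounding.} This step is exactly as in the previous lemma. There are $i⁺-i⁻$ further multiplications by $2^{-σ}$, each costing $O(p\log p)$. Then we multiply each $a(i)$ by the corresponding factor and round to an integer, also at $O(p\log p)$ per entry, since $|a(i)/2^{σi+α}| ∈ 2^{O(wp)}$ means every operand has $O(p)$ words. The error accounting from the previous lemma carries over: the relative errors of $2^{-σ}$ and of the starting factor contribute an extra factor of $O(k + i⁺-i⁻+1)$, which is still $2^{-Ω(wp)}$. With slack in the working precision, this ensures \Cref{error_requirement_rescale_round}.

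The main obstacle is the error propagation in Step 1 combined with the exponentiation in Step 2. An error of relative size $ε$ in $2^{-σ}$ becomes roughly $(i⁺-i⁻)ε$ after the recurrence, and $2^{-v/2^k}$ is a product of up to $k$ approximate factors. We need all of this to stay below the $\min(1, |a(i)|/2^{σi+α+1})$ tolerance. I would handle it by working with $p + c$ words for a suitable constant $c$, noting that $k$ and $i⁺-i⁻$ are at most $2^{O(w)}$, and that the remaining cost is $O\bigl((i⁺-i⁻+1)p\log p\bigr)$.
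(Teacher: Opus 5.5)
Your overall route matches the paper's proof. You obtain $2^{-\sigma}$ by $k$ successive square roots, each costing $O(p\log p)$ because of the $O(1)$-time $\Omega(w)$-bit seed. You absorb the $O(kp\log p)$ total using $k\in O(i^+-i^-+1)$, and then run the multiply-by-$2^{-\sigma}$ recurrence of the previous lemma. The paper's proof says nothing about the starting factor $2^{-\alpha-\sigma i^-}$. Your Step~2, which supplies it, is wrong as written.

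Your chain satisfies $x_r = 2^{-s/2^r}$, so $x_{k-r} = x_k^{2^r} = 2^{-s2^r/2^k}$. The product of the $x_{k-r}$ over the set bits $r$ of $v$ is therefore $x_k^{v} = 2^{-sv/2^k}$, not $2^{-v/2^k}$. Likewise, the powers of $x_k$ are $2^{-se/2^k}$, so binary exponentiation in $x_k$ yields $2^{-v/2^k}$ only when $s$ divides $v$. You have conflated $x_k = 2^{-\sigma}$ with $2^{-1/2^k}$.

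The repair is easy, but it must stay within $O(k)$ multiplications. Plain binary exponentiation $x_k^{i^-}$ needs $\Theta(\log i^-)$ multiplications, and $\log i^-$ need not be $O(i^+-i^-+1)$. There are two ways to fix it.

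\textbf{Reduce $i^-$ instead of $si^-$.} Since $x_k^{2^k} = 2^{-s}$ exactly, $2^{-\sigma i^-} = 2^{-s\lfloor i^-/2^k\rfloor}\cdot x_k^{\,i^- \bmod 2^k}$. The first factor is an exact shift. The second is the product of the already computed $x_{k-r}$ over the set bits $r$ of $i^- \bmod 2^k$, which is at most $k$ multiplications.

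\textbf{Start the chain at $2^{-1}$.} Taking $k$ square roots of $2^{-1}$ gives $y_r = 2^{-1/2^r}$ at the same cost. Then both $2^{-\sigma}$ and your $2^{-v/2^k}$ are exact powers of two times products of at most $k$ of the $y_r$.

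With either fix the relative error grows only by a factor $O(k)$, and the rest of your argument goes through.
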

\begin{proof}
	Recall a square root can be computed to $O(p)$ words of precision 
	in time $O(p \log p)$ from a $Ω(w)$-bit approximation.
	The $p \log p \log k$ factor gets absorbed into the
	$(i⁺-i⁻+1) ⋅ p \log p$ factor.
	The rest follows immediately.
\end{proof}
An easy way to get a
$Ω(w)$-bit square root approximation in $O(1)$ time
is to prepare a lookup table with $w^{Ω(1)}$ word-sized entries.
This is a common method used by hardware in practice.  
If $n ∈ w^{Ω(1)}$, then preparing the lookup table takes less than linear time,
which can be considered negligible.

\begin{lemma}
	\label{assume_rescaled_convolution}
	Consider $\RescaledConvolution{⋅}$ procedure in \Cref{algorithm:linear_scaling}.
	Suppose the input satisfies $\abs{a(i)/2^{σ ⋅ i+α}} ∈ 2^{O(wp)}$,
	$\abs{b(j)/2^{σ ⋅ j+β}} ∈ 2^{O(wp)}$ for all $(i, j) ∈ 𝐫$,
	$σ$ has denominator $2^k$ for a positive integer $k ∈ O(i⁺-i⁻+1)$,
	and any $Ω(w)$-bit square root approximation can be computed in $O(1)$ time.
	Then the time complexity of the algorithm is $O\bigl(‾n p \log ( ‾n p )\bigr)$,
	where $‾n = i⁺+j⁺-i⁻-j⁻+1$.
\end{lemma}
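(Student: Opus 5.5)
The proof will analyze the three phases of \textproc{RescaledConvolution} separately: the two calls to \textproc{RescaleRound}, the single call to \textproc{IntegerConvolution}, and the single call to \textproc{Unscale}. Throughout we use $i⁺-i⁻+1 ≤ ‾n$ and $j⁺-j⁻+1 ≤ ‾n$, which hold since $‾n = (i⁺-i⁻)+(j⁺-j⁻)+1$.

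\emph{Rescaling.} The hypotheses of \Cref{rescale_round_time_complexity_variant} are exactly the ones assumed here. Applying it to $a|_{[i⁻‥i⁺]}$ with $(σ,α)$ and to $b|_{[j⁻‥j⁺]}$ with $(σ,β)$ gives total time $O(‾n p \log p)$. For the $b$ call, the denominator condition reads $k ∈ O(i⁺-i⁻+1) ⊆ O(‾n)$; this is all the proof of \Cref{rescale_round_time_complexity_variant} uses, since the $p\log p\log k$ term is absorbed into $‾n p\log p$. The hypothesis $\abs{a(i)/2^{σ i+α}} ∈ 2^{O(wp)}$ guarantees that every $‾a(i)$ is an $O(p)$-word integer, and likewise every $‾b(j)$.

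\emph{Integer convolution.} Next we multiply two polynomials with $O(‾n)$ coefficients, each an $O(p)$-word integer. By Kronecker substitution together with the $O(N\log N)$ integer multiplication of \cite{Furer2014}, as recalled in the introduction, this costs $O(‾n p\log(‾n p))$. The padding per coefficient is only $O(p + \log ‾n / w) = O(p)$ words, because $‾n ≤ n+m+1 ≤ 2^{w+1}$, so each output $‾c(k)$ is an $O(p)$-word integer. This is the dominant term.

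\emph{Unscaling.} \textproc{Unscale} must compute $‾c(k)\cdot 2^{σ k+α+β}$ for the $O(‾n)$ values $k ∈ [i⁻+j⁻ ‥ i⁺+j⁺]$. We reuse the trick from \Cref{rescale_round_time_complexity_variant}. First compute $2^{σ}$ and $2^{σ(i⁻+j⁻)+α+β}$ once to $O(p)$ words. Because $σ$ has denominator $2^k$, the fractional power $2^{σ}$ comes from $O(k)$ successive square roots, each costing $O(p\log p)$ given $O(1)$-time $Ω(w)$-bit seeds; the integer part of each exponent is handled by shifts. This totals $O(k\, p\log p) ⊆ O(‾n p\log p)$. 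We then obtain the successive scale factors by repeated multiplication, and perform one $O(p)$-word product per output, for $O(‾n p\log p)$ in total.

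\emph{Main obstacle.} The delicate step is the error bound in unscaling. Repeated multiplication accumulates relative error of order $‾n\, 2^{-\Theta(wp)}$, while \Cref{error_requirement_unscale} demands relative error at most $2^{1-wp}$. We handle this by carrying $p+O(1)$ words of working precision: the extra $O(1)$ words absorb the factor $‾n ≤ 2^{w+1}$ and raise the cost only by a constant factor. Binary splitting is an alternative. Summing the three phases gives $O(‾n p\log(‾n p))$.
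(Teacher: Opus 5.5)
Your proposal is correct and follows the same route as the paper. It uses \Cref{rescale_round_time_complexity_variant} for the two \textproc{RescaleRound} calls, Kronecker substitution with fast integer multiplication \cite{Furer2014} for \textproc{IntegerConvolution}, and the same precompute-then-multiply trick for \textproc{Unscale}. Your additional checks make explicit details the paper leaves implicit: the $b$-call only needs $k \in O(\overline{n})$, the output coefficients stay $O(p)$-word integers because $\overline{n} \le 2^{O(w)}$, and $p+O(1)$ words of working precision preserve \Cref{error_requirement_unscale}.
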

\begin{proof}
	Using \Cref{rescale_round_time_complexity_variant}, we can compute $‾a$ and $‾b$ in the claimed bound.
	We know from the introduction that, by Kronecker substitution,
	two polynomials with $O(p)$-word integer coefficients and
	total degree $n$ can be multiplied in time $O(np \log (np))$.
	Finally, $\Unscale{⋅}$ can be implemented with the same time complexity as $\RescaleRound{⋅}$
	using the same idea as in \Cref{rescale_round_time_complexity_variant}.
\end{proof}

\begin{theorem}
	\label{main_algorithm_time_complexity}
	$\textproc{ApproximateConvexConvolution}(⋅)$ in \Cref{algorithm:block_division} can be implemented in $O(np \log (np))$ time.
\end{theorem}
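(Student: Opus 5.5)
The cost of \Cref{algorithm:block_division} splits into a preprocessing part and the calls to $\AutoRescaledConvolution{a, b, ‾k(q), ‾k(q+1)-1}$ for $q ∈ [t-1]$. The strategy is to bound each call by \Cref{assume_rescaled_convolution} and then sum the bounds with \Cref{sum_side_rectangles_le}.

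\emph{Preprocessing.} We take $f$ and $g$ directly from the exponents of $a$ and $b$, in $O(n)$ time. The lower convex hulls $˘f$ and $˘g$ come from a monotone-chain scan, also in $O(n)$. Their min-plus convolution $˘h$ is the merge of the two slope sequences, again $O(n)$. Since $d(i,j)$ is convex along each antidiagonal, the endpoints $i⁻_k, i⁺_k$ move monotonically in $k$ by \cite[Lemma~18]{bringmann_2023_arxiv}, so a two-pointer sweep computes all of them in $O(n)$. Constructing $‾k$ greedily also takes two pointers. For fixed $q$, the condition \Cref{max_d_corner_le} is monotone in the candidate $‾k(q+1)$, by \Cref{max_d_on_corner} and the monotonicity of the corners. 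Both $‾k(q)$ and the candidate only increase, so the total work is $O(n+m)$ evaluations of $d$, each $O(1)$.

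\emph{One call.} Fix $q$ and let $𝐫 = [i⁻_{k⁻} ‥ i⁺_{k⁺}] × [j⁻_{k⁻} ‥ j⁺_{k⁺}]$ with $σ, α, β$ as in \Cref{algorithm:auto_rescaled_convolution}. First, $\max_𝐫 d ≤ κwp$ holds by \Cref{max_d_corner_le} together with \Cref{max_d_on_corner}. Second, the aspect ratio of $𝐫$ is bounded by a constant. This follows from \Cref{lemma_almost_square} applied to the squares $𝐫_{k⁻}$ and $𝐫_{k⁺}$ (with $λ ≤ 3$), or trivially when one side is short. \Cref{prop_linear_scaling_on_rectangle} then gives $-˘f(i) - (σi+α) ∈ [wp, wp+O(κwp)]$, and likewise for $˘g$. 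Since $|f-˘f|$ is not pointwise bounded, this needs care: $|a(i)| ≤ 2^{-˘f(i)}$ gives the upper bound $|a(i)/2^{σi+α}| ≤ 2^{O(wp)}$, and small $|a(i)|$ only makes $‾a(i)$ smaller, so the hypothesis of \Cref{assume_rescaled_convolution} holds.

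The hypothesis about the denominator of $σ$ is the main obstacle, since $σ$ as defined is a rational with an arbitrary denominator. I would first try replacing $σ$ by a dyadic rational $σ'$ with denominator $2^{k}$, where $k = \lceil\log_2(i⁺-i⁻+1)\rceil + O(1)$. The perturbation $|σ-σ'|(i⁺-i⁻) = O(1)$ changes every exponent by $O(1)$, and this can be absorbed by shifting $α$ and $β$ down by $O(1)$. Both \Cref{prop:linear_scaling} (with $φ$ adjusted by constants, so the error constant grows only by a constant) and the $2^{O(wp)}$ bounds survive this change. With that, \Cref{assume_rescaled_convolution} bounds the call by $O(‾n_q p \log(‾n_q p))$, where $‾n_q = i⁺_{k⁺}+j⁺_{k⁺}-i⁻_{k⁻}-j⁻_{k⁻}+1$.

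\emph{Summation.} Since $i⁺_k + j⁺_k$ is monotone in $k$, we have $‾n_q ≤ i⁺_{‾k(q+1)}+j⁺_{‾k(q+1)}-i⁻_{‾k(q)}-j⁻_{‾k(q)}+1$. Then \Cref{sum_side_rectangles_le} gives $∑_q ‾n_q ≤ \frac{3κ-8}{κ-4}(n+m) + t = O(n)$, using $m ≤ n$ and $t ≤ n+m+1$. Each term satisfies $\log(‾n_q p) ≤ \log(2np+p)$, so the total is $O(np\log(np))$. The $O(1)$-time square root approximation assumed in \Cref{assume_rescaled_convolution} comes from the lookup table discussed after \Cref{rescale_round_time_complexity_variant}; if that is not assumed, the one-off $(\log^2 w + p\log p)\log(wp)$ cost for computing $2^{σ}$ must be charged per call. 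Avoiding that per-call charge is the second delicate point, and I would handle it with the table.
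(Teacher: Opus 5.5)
\paragraph{Verdict: a genuine gap in the aspect-ratio step.}
Your route matches the paper's proof. You bound each call by \Cref{assume_rescaled_convolution}, getting $d\le\kappa wp$ on $\mathbf{r}$ from \eqref{max_d_corner_le} and \Cref{max_d_on_corner}, and near-linearity from \Cref{prop_linear_scaling_on_rectangle}. You then sum with \Cref{sum_side_rectangles_le}. Your additions are reasonable refinements of points the paper leaves implicit: only an upper bound on $|a(i)|$ is needed, $\sigma$ is rounded to a dyadic rational, the square-root table is used, and preprocessing is linear.

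The gap is the claim that $\mathbf{r}$ has bounded aspect ratio; the paper's proof makes the same unjustified appeal. \Cref{lemma_almost_square} carries hypothesis \eqref{overlap_amount} of \Cref{lemma_main}. For $R_1=\mathbf{r}_{k^-}$ and $R_2=\mathbf{r}_{k^+}$ with sides $\ell_1,\ell_2$, the identities $i^-_k+j^+_k=i^+_k+j^-_k=k$ turn it into $\ell_1+\ell_2-(k^+-k^-)\ge\epsilon\min(\ell_1,\ell_2)$ with $\epsilon>0$. So the two end squares must overlap, and the greedy rule \eqref{max_d_corner_le}, which bounds $d$ only at two corners, forces nothing of the kind. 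Your fallback ``trivially when one side is short'' points the wrong way. A short side makes the ratio large, and \Cref{prop_linear_scaling_on_rectangle} requires both sides $\ge 1$.

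\paragraph{The conclusion is false, not merely unproved.}
Take $a(i)\approx\binom{n}{i}$ and $b=1+x$. Every $d$ is below $\log_2 n+1\le\kappa wp$ (as $n\le 2^w$). So there is a single block, with $\mathbf{r}=[0\mathbin{.\,.}n]\times[0\mathbin{.\,.}1]$, $\sigma=0$ and $\alpha=-wp$. Then $\overline a(i)\approx\binom{n}{i}2^{wp}$ has $\Theta(n)$ bits for $\Theta(n)$ indices, and this one call costs $\Omega(n^2/w)$.

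Balanced degrees do not help. Let $m=n$, $g(j)=B|j-j_0|$ with $0<j_0<m$, and let every slope of $\breve f$ lie in $[-S,S]$ with $B>S+\kappa wp$. Then every $\mathbf{r}_k$ is a single point, and one block has rectangle $[1\mathbin{.\,.}n]\times\{j_0\}$. Along it $\breve f$ can sit $\Theta(nS)$ below its chord, so again the integers fed to the convolution have $\Theta(nS)$ bits.

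\paragraph{A fix that keeps the rest of your proof.}
The theorem holds only for a modified implementation. Inside each call:
\begin{enumerate}
\item Cut $\mathbf{r}$ along its longer side into disjoint pieces whose long side is at most twice the short side $\ell$ (single points if $\ell=0$).
\item Run \textsc{RescaledConvolution} on each piece with its own $\sigma,\alpha,\beta$, taken from \Cref{prop_linear_scaling_on_rectangle} with $\lambda=2$.
\item Add the at most two partial results that land on each $k$.
\end{enumerate}
The pieces' sizes sum to $O(\overline n_q)$, so your summation via \Cref{sum_side_rectangles_le} is unchanged. Each antidiagonal's terms are still partitioned among pieces, so the accuracy bound changes only by a constant. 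Your dyadic rounding of $\sigma$ then applies per piece, where the aspect ratio really is bounded.
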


\begin{proof}
	Clearly the sequence $‾k$ can be computed in $O(n+m)$ time.
	Since we assume $m ≤ n$, this is $O(n)$.

	Apart from the final return statement, the algorithm consists of $t$ iterations, one for each value $q ∈ [t-1]$.
	We fix any $q ∈ [t-1]$.
	Following the algorithm code, we let $k⁻ = ‾k(q)$ and $k⁺ = ‾k(q+1)-1$.

	Since the rational numbers $σ$, $α$, $β$ computed at the beginning of the procedure $\AutoRescaledConvolution{⋅}$
	fit in $O(1)$ words, most of the time is spent in the $\RescaledConvolution{⋅}$ subroutine.
	Let $𝐫 = [i⁻_{k⁻}‥ i⁺_{k⁺}]× [j⁻_{k⁻}‥ j⁺_{k⁺}]$.
	By \Cref{max_d_corner_le}, which bounds the value of $d(⋅)$ on two corners of $𝐫$ by $κ⋅w⋅p$,
	and \Cref{max_d_on_corner}, which states that in a rectangle, the maximum value of $d(⋅)$
	can only be on certain corners, for every $(i, j) ∈ 𝐫$, $d(i, j) ≤ κ⋅w⋅p$.

	Set $δ = κwp$.
	\Cref{prop_linear_scaling_on_rectangle} states that, if $d(i, j) ≤ δ$ for all $(i, j) ∈ 𝐫$,
	and some other hypotheses are satisfied, then $˘f$ and $˘g$ are nearly linear on that rectangle.
	We apply this proposition on this rectangle $𝐫$ with $λ = 3$.
	The assumptions are satisfied because of \Cref{lemma_almost_square}, which states that the longer side of $𝐫$ is not much longer than the shorter side, and
	$d(i, j) ≤ κ⋅w⋅p$ for every $(i, j) ∈ 𝐫$, as proved above.
	With the $σ$, $α$ and $β$ computed in \Cref{algorithm:auto_rescaled_convolution},
	for every $(i, j) ∈ 𝐫$, $-˘f(i, j) - (σi-α) ≤ κwp$.
	We can apply \Cref{assume_rescaled_convolution} to conclude that,
	for this iteration, the invocation of $\AutoRescaledConvolution{⋅}$ takes time 
	$O(‾n κ p \log (‾n κ p))$, where $‾n = i⁺_{k⁺} - i⁻_{k⁻} + j⁺_{k⁺} - j⁻_{k⁻} + 1$.

	Applying \Cref{sum_side_rectangles_le}, we conclude the total time complexity of all
	$t$ iterations, dropping the constant factor $κ$, is $O\bigl((n+m) p \log ((n+m) p)\bigr)$.
	Since $n ≥ m$ by definition, this is $O(np \log(np))$ as needed.
\end{proof}

\section{Discussion}
\label{sec_discussion}

\subsection{Motivation for the definition of relative Newton error}

Earlier, we remarked that our algorithm gives good result when the
product polynomial satisfies \Cref{optimistic_c_bound}, in other words,
the coefficient sequence of the product is near-log-convex.
Which raises the question: if this does not hold, how good is our algorithm?

We rigorously proved above that the algorithm will produce a polynomial with small relative Newton error.
But this only raise another question: Why do we choose to define relative Newton error the way we did?
Part of the reason is because the error of the algorithm naturally gives this bound.
But there are other justifications.

The best we can hope for is for the algorithm to have small \emph{relative error},
say, for each $k ∈[n+m]$, $|˜c(k)-c(k)| ≤ 2^{-wp} |c(k)| ⋅ O(1)$.
This definition is too strict: consider the two factor polynomials being $x+1$ and $x-1$ respectively,
then their product is $x^2-1$, with $c(1)=0$.
If the first factor is slightly modified to $x+1+2^{-wp-1}$, $c(1)$ would become $2^{-wp-1}$,
and the relative error would be $|2^{-wp-1}-0|/0 = ∞$.
If each coefficient of the input is given as a $p$-word floating point, it is reasonable to assume that
the input is only known to approximately $p$ words of precision, and introducing a relative error of $O(2^{-wp})$
to an input coefficient shouldn't lead to significant error in the output.
This is why, in classical analysis of numerical stability of algorithms,
the definitions of both forward stability and backward stability allow for variation in the input,
see \cite[Section~1.5]{higham2002accuracy} and \cite[Lecture~14]{trefethen2022numerical}.

Slightly less ambitiously, we may hope for the result $˜c$ to satisfy:
for some $‾a \colon[n] → ℝ$ and $‾b \colon[m] → ℝ$ such that $|‾a-a| ≤ ε |a|$ and $|‾b-b| ≤ ε |b|$,
let $‾c \colon [n+m] → ℝ$ be the convolution of $‾a$ and $‾b$,
then $|˜c-‾c| ≤ ε |‾c|$ holds.\footnote{We recall arithmetic and comparison operators are performed termwise:
$|‾a-a| ≤ ε |a|$ means
$|‾a(i)-a(i)| ≤ ε |a(i)|$ for all $i ∈ [n]$.}
This is achieved with $ε = O(1) ⋅ n ⋅ 2^{-wp}$ by the naive multiplication algorithm, where the intermediate values
in the computation are stored as $p$-word floating-point numbers.  
Unfortunately, the naive multiplication algorithm takes time quadratic in the degree.
For any $ε ∈ O(1)$, 
the min-plus convolution problem
can be reduced to
this problem, computing a polynomial product with small forward coefficientwise relative error,
with only a constant factor overhead.
As discussed in \Cref{sec_related_problems}, the min-plus convolution problem
is believed to be unsolvable in time faster than $n^{2-o(1)}$,
so we don't expect to be able to solve this problem either.

This is why we take the convex hull:
our bound is almost equivalent to $|‾a-a| ≤ ε 2^{-˘f}$ and $|‾b-b| ≤ ε 2^{-˘g}$.

Can the algorithm be backwards stable? In other words, can we hope for
the result $˜c$ to be \emph{exactly} the convolution of $‾a$ and $‾b$,
for $|‾a-a|$ and $|‾b-b|$ both small?
A simple example shows this is not the case:
when $a$ and $b$ are both the list of coefficients of $x+1$,
then $c$ is the list of coefficients of $(x+1)^2$.
If the algorithm makes a small error in its calculation, so that $c = (x+1)^2+ε$,
then we necessarily have $‾a = x+1+√{-ε}$.
For $0 < ε ≪ 1$, $√ε$ is much larger than $ε$.

How good is the relative Newton error? One way to argue for it is the following, motivated by \cite[Section~9]{stablemult}.

Fix any point $z ∈ ℂ$. We consider the evaluation $∑ a(i) z^i$ of the polynomials $a$ at the point $z$.
If for some $i^* ∈ [n]$, the coefficient $a(i^*)$ is slightly perturbed to $a(i^*) ⋅ (1+ε)$
for some $ε ∈ ℂ$, $0<|ε|≤ 1$,
then the evaluation result is perturbed to $∑ a(i) z^i + ε a(i^*) z^{i^*}$.
So, it's sensible that we define the relative condition number
\[ \RelCond(a, z) = \frac{\max_{i ∈ [n]} \abs{a(i) z^i}}{\abs[\big]{∑_{i ∈ [n]} a(i) z^i}}. \]
If this value is small, we can say that we know with low relative error the evaluation of $a$ at $z$.

We can also define
\[ \RelNewtonCond(a, z) = \frac{\max_{i ∈ [n]} \abs{2^{-˘f(i)} z^i}}{\abs[\big]{∑_{i ∈ [n]} a(i) z^i}}. \]

Then we have
\begin{proposition}
	$\RelNewtonCond(a, z) = \RelCond(a, z)$.
\end{proposition}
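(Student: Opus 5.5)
The plan is to compare the numerators, since both condition numbers share the denominator $\abs{∑_i a(i) z^i}$, assumed nonzero. I will show that
\[ \max_{i ∈ [n]} 2^{-˘f(i)} \abs{z}^i = \max_{i ∈ [n]} 2^{-f(i)} \abs{z}^i , \]
and then compare $2^{-f(i)}$ with $\abs{a(i)}$.

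\textbf{The case $z = 0$.} Only the $i=0$ term survives in each numerator. The lower convex hull agrees with $f$ at the endpoints, so $˘f(0) = f(0)$, and the $z=0$ case reduces to comparing $2^{-f(0)}$ with $\abs{a(0)}$.

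\textbf{The case $z ≠ 0$.} Put $s = \log_2 \abs{z}$. Then $\log_2\bigl(2^{-˘f(i)}\abs{z}^i\bigr) = -(˘f(i) - s i)$, so I must show
\[ \min_i \bigl(˘f(i) - s⋅\id(i)\bigr) = \min_i \bigl(f(i) - s⋅\id(i)\bigr). \]
The inequality $≤$ is immediate, because $˘f ≤ f$ pointwise; the convention $f(i) = +∞$ when $a(i)=0$ causes no trouble. For $≥$, I use the standard support-function property of the lower convex hull. The function $f - s⋅\id$ has lower convex hull $˘f - s⋅\id$, since subtracting an affine function commutes with taking lower convex hulls. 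Let $μ$ be the minimum of $f - s⋅\id$. The constant $μ$ is a convex minorant of $f - s⋅\id$, so it lies below the hull, which gives $˘f - s⋅\id ≥ μ$. Equivalently: the minimum of the linear functional $(i, y) ↦ y - si$ over the convex hull of the points $(i, f(i))$ is attained at an extreme point, and extreme points are actual data points $(i, f(i))$. This is the only conceptual step, and it is short.

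\textbf{Comparing $2^{-f}$ with $\abs{a}$.} This is the main obstacle. With the paper's definition $f = -⌈\log_2 \abs{a}⌉$, we have
\[ \abs{a(i)} ≤ 2^{-f(i)} < 2\abs{a(i)} , \]
and therefore
\[ \RelCond(a,z) ≤ \RelNewtonCond(a,z) < 2\,\RelCond(a,z). \]
So literal equality holds only if $f$ is taken to be $-\log_2\abs{a}$ without the ceiling, or if the statement is read up to the rounding factor. In the write-up I would state the exact equality for the unrounded exponent sequence, which is what the argument above proves verbatim, and then note that the ceiling introduces only the factor $[1, 2)$. That factor is harmless for the intended use, since both quantities measure conditioning only up to constants.
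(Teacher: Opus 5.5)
Your proof is correct, and your objection about the ceiling is justified. With the paper's definition $f = -\lceil \log_2 \abs{a} \rceil$, the proposition is false as literally stated. For instance, $a = 3 + 3x$ and $z = 1$ give $f = \breve{f} = (-2,-2)$, so $\RelNewtonCond(a,1) = 4/6$ while $\RelCond(a,1) = 3/6$.

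The paper's own proof silently uses $2^{-f(i)} = \abs{a(i)}$ at indices where $\breve{f}(i) = f(i)$. This happens twice: in the case $\breve{f}(i) = f(i)$, which the paper calls trivial, and in the second equality of the display writing $\abs{2^{-\breve{f}(i)} z^i}$ as $\abs{a(i^-) z^{i^-}}^{\lambda} \cdot \abs{a(i^+) z^{i^+}}^{1-\lambda}$. So the paper really proves the unrounded version. For the rounded $f$, your sandwich $\RelCond(a,z) \le \RelNewtonCond(a,z) < 2\,\RelCond(a,z)$ is the correct statement.

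Your route through the hull step differs mildly from the paper's. The paper uses the explicit structure of the hull. For $i$ with $\breve{f}(i) < f(i)$, it takes bracketing vertices $i^- < i < i^+$ with $\breve{f}(i^\pm) = f(i^\pm)$. It then writes $\abs{2^{-\breve{f}(i)} z^i}$ as a weighted geometric mean of the two endpoint values, which is at most their maximum.

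You instead take logarithms and use only the extremal characterization of the hull: the affine function $\mu + s \cdot \id$ is a minorant of $f$, so it lies below $\breve{f}$. Both arguments express the fact that a linear functional on a convex hull is minimized at a data point. Yours needs neither the bracketing vertices nor the interpolation formula. The paper's multiplicative form needs no separate treatment of $z = 0$, since both sides vanish there for $i > 0$.
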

\begin{proof}
	The denominator is the same, so it suffices to prove
	\[ \max_{i ∈ [n]} \, \abs{2^{-˘f(i)} z^i} = \max_{i ∈ [n]} \, \abs{a(i) z^i} . \]
	Since $˘f(i) ≤ f(i)$ for every $i$, 
	$2^{-˘f(i)} ≥ a(i)$.
	So we get
	\[ \max_{i ∈ [n]} \, \abs{2^{-˘f(i)} z^i} ≥ \max_{i ∈ [n]} \, \abs{a(i) z^i}. \]
	It remains to prove, for every $i$,
	\[ \abs{2^{-˘f(i)} z^i} ≤ \max_{i ∈ [n]} \, \abs{a(i) z^i}. \]
	Fix a particular $i$. The case $˘f(i) = f(i)$ is trivial, so we assume $˘f(i) < f(i)$.
	By the definition of the convex hull,
	there exists $i⁻ < i < i⁺$
	such that $˘f(i⁻) = f(i⁻)$, $˘f(i⁺) = f(i⁺)$,
	and
	\[ ˘f(i) = \frac{˘f(i⁻) ⋅ (i⁺-i) + ˘f(i⁺) ⋅ (i-i⁻)}{i⁺-i⁻}.  \]
	In words, the point $(i, ˘f(i))$ is on the line segment connecting
	$(i⁻, ˘f(i⁻))$ and $(i⁺, ˘f(i⁺))$.
	Let $λ = \frac{i⁺-i}{i⁺-i⁻}$, then $0 < λ < 1$. Easy algebraic manipulations show that
\[ \abs[\big]{2^{-˘f(i)} z^i} = \abs[\big]{2^{-˘f(i⁻)} z^{i⁻}}^λ ⋅ \abs[\big]{2^{-˘f(i⁺)} z^{i⁺}}^{1-λ} = \abs[\big]{a(i⁻) z^{i⁻}}^λ ⋅ \abs[\big]{a(i⁺) z^{i⁺}}^{1-λ}.  \]
	Therefore, $\abs{2^{-˘f(i)} z^i} ≤ \max(\abs{a(i⁻) z^{i⁻}}, \abs{a(i⁺) z^{i⁺}})$, and we're done.
\end{proof}

Is $\RelCond(a, z)$ small for reasonable polynomials $a$?
Unfortunately not: as $z$ approaches a root of $a$, the denominator approaches zero, so $\RelCond(a, z)$ approaches infinity.

\subsection{A difference from previous work in a definition}
\label{sec_discuss_def_relative_Newton_error}

With notation as in \Cref{def_relative_Newton_error}.
In \cite[Section~5]{stablemult}, the author defines
the relative Newton error of $˜c$ with respect to $c$ to be
\[ \max_{k ∈ [n+m]} \frac{|˜c(k)-c(k)|}{2^{-˘h'(k)}}, \]
where $˘h'$ is the lower convex hull of $- \log_2 |c|$.
This is slightly different from our definitions.

We will see that $˘h'$ gives \emph{better} error bound than $˘h$, up to a factor polynomial in the degree.
\begin{proposition}
	For each $k ∈[n+m]$, $2^{-˘h'(k)} ≤ (n+1) 2^{-˘h(k)}$.
\end{proposition}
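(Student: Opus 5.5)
The plan is to show that $k \mapsto -\log_2\bigl((n+1)2^{-˘h(k)}\bigr) = ˘h(k) - \log_2(n+1)$ is a convex minorant of $-\log_2|c|$. Since $˘h'$ is the \emph{greatest} convex minorant of $-\log_2|c|$, this gives $˘h(k)-\log_2(n+1) ≤ ˘h'(k)$, which is exactly $2^{-˘h'(k)} ≤ (n+1)2^{-˘h(k)}$.

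Convexity is immediate. $˘f$ and $˘g$ are convex, and their min-plus convolution $˘h$ is convex; this is the standard fact (already used in the paper when it says $e(k) ≥ 0$) that the infimal convolution of convex sequences is convex. Subtracting a constant preserves convexity.

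For the minorant property, fix $k ∈ [n+m]$. If $c(k)=0$ then $-\log_2|c(k)| = +∞$ and there is nothing to prove. Otherwise, by the triangle inequality,
\[ |c(k)| ≤ ∑_{i+j=k} |a(i)|\,|b(j)| ≤ ∑_{i+j=k} 2^{-f(i)-g(j)} ≤ ∑_{i+j=k} 2^{-˘f(i)-˘g(j)} ≤ (\min(n,m)+1)\, 2^{-˘h(k)}. \]
Here we use three facts:
\begin{itemize}
\item $|a(i)| ≤ 2^{-f(i)}$, since $f = -⌈\log_2|a|⌉$ (and the same for $b$);
\item $˘f ≤ f$ and $˘g ≤ g$, as lower convex hulls;
\item $˘f(i)+˘g(j) ≥ ˘h(k)$, by the definition of $˘h$.
\end{itemize}
The number of terms is at most $m+1 ≤ n+1$. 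Taking $-\log_2$ gives $-\log_2|c(k)| ≥ ˘h(k) - \log_2(n+1)$.

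The only point needing care is the convention for $˘h'$ when some $c(k)=0$, where $-\log_2|c|$ takes the value $+∞$. The argument is unaffected, because the inequality at such $k$ is trivial and the lower convex hull is still the supremum of convex (affine) minorants. So the maximality property of the hull applies, and the proposition follows.

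I expect no real obstacle here. The step to state carefully is the appeal to maximality of the lower convex hull: every convex function lying below $-\log_2|c|$ at the integer points lies below $˘h'$.
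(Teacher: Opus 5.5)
Your proof is correct and is essentially the paper's argument. You bound each product $|a(i)b(j)|$ with $i+j=k$ by $2^{-\breve{f}(i)-\breve{g}(j)} \le 2^{-\breve{h}(k)}$, count at most $n+1$ terms to get $-\log_2|c| \ge \breve{h} - \log_2(n+1)$, and use convexity of $\breve{h}$ to conclude that this convex minorant lies below the lower convex hull $\breve{h}'$; your write-up is only slightly more careful (the sharper count $m+1 \le n+1$, the case $c(k)=0$, and an explicit reason why $\breve{h}$ is convex).
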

Recall $n$ is the degree of $a$, $m$ is the degree of $b$, and $n ≥ m$.
\begin{proof}
	Fix any $k ∈[n+m]$.
	For any $(i,j)∈[n] ×[m]$ such that $i+j=k$, we have $˘h(k) ≤ ˘f(i)+˘g(j)$.
	So $2^{-˘h(k)}≥ 2^{-˘f(i)} 2^{-˘g(i)} ≥ a(i) b(j)$.
	Since $c(k) = ∑_{i+j=k} a(i) b(j)$, and there are at most $n+1$ terms in the sum in the right hand side,
	$\abs{c(k)} ≤ (n+1) \max_{i+j=k} \abs{a(i) b(j)} ≤ (n+1) 2^{-˘h(k)}$.

	Since this holds for any $k$, we have $- \log_2 |c| ≥ ˘h - \log_2 (n+1)$.
	Since $˘h$ is convex, so is $˘h - \log_2 (n+1)$.
	Since $- \log_2 |c|$ is is pointwise greater than or equal to the convex function $˘h - \log_2 (n+1)$,
	its lower convex hull is also pointwise greater than $˘h - \log_2 (n+1)$. The result follows.
\end{proof}
Sometimes, it is better by an exponential factor.
\begin{example}
	Let $n$ be a large even positive integer.
	Let $∑_i a(i) x^i = (x+1)^n$ and $∑_j b(j) x^j = (x-1)^n$.  Then $∑_k c(k) x^k = (x^2-1)^n$.
	It is easy to verify that $\max_i \abs{a(i)}= \max_j \abs{b(j)}=\max_k \abs{c(k)} = \binom{n}{n/2}$.
\end{example}

Does this mean the algorithm in \cite{stablemult} can be exponentially better than our algorithm?
On close scrutiny, it turns out that there is an error in their analysis.
In particular, \cite[Equation~2]{stablemult} states (after translating into the notation of this article)
\[
	‖c‖ ≥ \frac{1}{2n} ‖a‖ \, ‖b‖
\]
where $‖a‖ = \max_i |a(i)|$ denotes the $L^1$-norm of $a$.
The error ultimately comes from the claim without proof that
\[ ‖ \FFT_ω (a) ‖ \, ‖ \FFT_ω (b) ‖ = ‖ \FFT_ω (c) ‖, \]
where $ω$ is a $2n$-th root of unity, and $\FFT_ω$ denotes the FFT transform with respect to $ω$:
for polynomial $a = ∑_i a(i) x^i$,
$\FFT_ω (a)$ is a sequence of $2n$ complex numbers defined by
\[ \FFT_ω (a) \colon [2n-1] → ℂ, \quad \FFT_ω (a)(i) = ∑_i a(i) ω^i. \]
It is clear that 
$\FFT_ω (c)$ is the pointwise product of $\FFT_ω (a)$ and $\FFT_ω (b)$,
but this only mean $\Abs{\FFT_ω (a)} \, \Abs{\FFT_ω (b)} ≥ \Abs{\FFT_ω (c)}$.
The necessary and sufficient condition for equality to hold
is \[ \argmax_i \abs{\FFT_ω (a)(i)} ∩ \argmax_i \abs{\FFT_ω (b)(i)} ≠ ∅, \]
where $\argmax_i f(i)$ is the set of values $i$ such that $f(i)$ attains its maximum.

After fixing the error,
the error bound becomes mostly the same as ours.
Therefore, we also modify the definition of relative Newton error in order to
correctly state the error of our algorithm.

\hbadness=10000

\bibliographystyle{alpha}
\bibliography{note}

@book{higham2002accuracy,
title = "Accuracy and Stability of Numerical Algorithms",
author = "Nicholas Higham",
year = "2002",
month = jan,
day = "1",
language = "English",
isbn = "0-89871-521-0",
publisher = "Society for Industrial and Applied Mathematics",
address = "United States",
}

@book{trefethen2022numerical,
  title={Numerical linear algebra},
  author={Trefethen, Lloyd N and Bau, David},
  year={2022},
  publisher={SIAM}
}

@inbook{Furer2014,
  title = {How Fast Can We Multiply Large Integers on an Actual Computer?},
  ISBN = {9783642544231},
  ISSN = {1611-3349},
  url = {http://dx.doi.org/10.1007/978-3-642-54423-1_57},
  DOI = {10.1007/978-3-642-54423-1_57},
  booktitle = {LATIN 2014: Theoretical Informatics},
  publisher = {Springer Berlin Heidelberg},
  author = {F{\"u}rer,  Martin},
  year = {2014},
  pages = {660–670}
}

@InProceedings{bringmann_2023,
author =	{Bringmann, Karl and Cassis, Alejandro},
title =	{Faster 0-1-Knapsack via Near-Convex Min-Plus-Convolution},
booktitle =	{31st Annual European Symposium on Algorithms (ESA 2023)},
pages =	{24:1--24:16},
series =	{Leibniz International Proceedings in Informatics (LIPIcs)},
ISBN =	{978-3-95977-295-2},
ISSN =	{1868-8969},
year =	{2023},
volume =	{274},
editor =	{G{\o}rtz, Inge Li and Farach-Colton, Martin and Puglisi, Simon J. and Herman, Grzegorz},
publisher =	{Schloss Dagstuhl -- Leibniz-Zentrum f{\"u}r Informatik},
address =	{Dagstuhl, Germany},
URL =		{https://drops.dagstuhl.de/entities/document/10.4230/LIPIcs.ESA.2023.24},
URN =		{urn:nbn:de:0030-drops-186776},
doi =		{10.4230/LIPIcs.ESA.2023.24}
}

@InProceedings{kunnemann_et_al:LIPIcs.ICALP.2017.21,
author =	{K{\"u}nnemann, Marvin and Paturi, Ramamohan and Schneider, Stefan},
title =	{On the Fine-Grained Complexity of One-Dimensional Dynamic Programming},
booktitle =	{44th International Colloquium on Automata, Languages, and Programming (ICALP 2017)},
pages =	{21:1--21:15},
series =	{Leibniz International Proceedings in Informatics (LIPIcs)},
ISBN =	{978-3-95977-041-5},
ISSN =	{1868-8969},
year =	{2017},
volume =	{80},
editor =	{Chatzigiannakis, Ioannis and Indyk, Piotr and Kuhn, Fabian and Muscholl, Anca},
publisher =	{Schloss Dagstuhl -- Leibniz-Zentrum f{\"u}r Informatik},
address =	{Dagstuhl, Germany},
URL =		{https://drops.dagstuhl.de/entities/document/10.4230/LIPIcs.ICALP.2017.21},
URN =		{urn:nbn:de:0030-drops-74688},
doi =		{10.4230/LIPIcs.ICALP.2017.21}
}

@InProceedings{cygan_et_al:LIPIcs.ICALP.2017.22,
author =	{Cygan, Marek and Mucha, Marcin and Wegrzycki, Karol and Wlodarczyk, Michal},
title =	{On Problems Equivalent to {$(\min,+)$}-Convolution},
booktitle =	{44th International Colloquium on Automata, Languages, and Programming (ICALP 2017)},
pages =	{22:1--22:15},
series =	{Leibniz International Proceedings in Informatics (LIPIcs)},
ISBN =	{978-3-95977-041-5},
ISSN =	{1868-8969},
year =	{2017},
volume =	{80},
editor =	{Chatzigiannakis, Ioannis and Indyk, Piotr and Kuhn, Fabian and Muscholl, Anca},
publisher =	{Schloss Dagstuhl -- Leibniz-Zentrum f{\"u}r Informatik},
address =	{Dagstuhl, Germany},
URL =		{https://drops.dagstuhl.de/entities/document/10.4230/LIPIcs.ICALP.2017.22},
URN =		{urn:nbn:de:0030-drops-74216},
doi =		{10.4230/LIPIcs.ICALP.2017.22}
}

@article{bringmann_2023_arxiv,
  doi = {10.48550/ARXIV.2305.01593},
  url = {https://arxiv.org/abs/2305.01593},
  author = {Bringmann,  Karl and Cassis,  Alejandro},
  title = {Faster 0-1-Knapsack via Near-Convex Min-Plus-Convolution},
  publisher = {arXiv},
  journal={arXiv preprint 2305.01593v1},
  year = {2023},
  copyright = {Creative Commons Attribution 4.0 International}
}

@TechReport{stablemult,
  author = 	 {van der Hoeven, Joris},
  title = 	 {Making fast multiplication of polynomials numerically stable},
  institution =  {Universit\'e Paris-Sud},
  year = 	 {2008},
  number = 	 {2008-02},
  address = 	 {Orsay, France}
}

@book{Brent_Zimmermann_2010,
place={Cambridge},
series={Cambridge Monographs on Applied and Computational Mathematics},
title={Modern Computer Arithmetic},
publisher={Cambridge University Press},
author={Brent,
Richard P. and Zimmermann,
Paul},
year={2010},
collection={Cambridge Monographs on Applied and Computational Mathematics}}

\end{document}